\newcommand{\abs}[1]{\left|#1\right|}
\newcommand\rE[1]{\upharpoonleft_{#1}}
\newcommand\lrb[1]{\left\lbrace#1 \right\rbrace}
\newcommand\lrp[1]{\left(#1 \right)}
\newcommand\C{{\mathbb C}}
\newcommand\E{{\mathbb E}}
\newcommand{\ceq}{\colonequals}
\newcommand\cc[1]{\overline {#1}}
\newcommand\vb[1]{\!\left\langle {#1} \right |}
\newcommand\vk[1]{\left |{#1}\right\rangle\!}
\newcommand\ip[2]{\left\langle {#1},{#2} \right\rangle}
\newcommand\no[1]{\left\| {#1} \right\|}
\newcommand{\initsp}{\mathsf{h}}
\newcommand\unit{\hbox{\rm 1\kern-2.8truept l}}
\newcommand\qF[1]{\mathcal F[#1]}
\newcommand\tr[1]{{{\rm tr}}\left(#1\right)}
\DeclareMathOperator{\im}{Im}
\DeclareMathOperator{\re}{Re}
\DeclareMathOperator{\dom}{dom}
\DeclareMathOperator{\ran}{ran}
\DeclareMathOperator{\ad}{ad}
\numberwithin{equation}{section}
\newtheorem{theorem}{Theorem}[section]
\newtheorem{lemma}[theorem]{Lemma}
\newtheorem{corollary}[theorem]{Corollary}
\newtheorem{proposition}[theorem]{Proposition}
\theoremstyle{definition}
\newtheorem{definition}[theorem]{Definition}
\newtheorem{example}[theorem]{Example}
\theoremstyle{remark}
\newtheorem{remark}[theorem]{Remark}
\newtheorem*{acknowledgments}{Acknowledgments}
\begin{document}
\title[Quantum Gaussian States in Terms of Annihilation Moments]{A Characterization of Quantum Gaussian States in Terms of Annihilation Moments}
% Information for  author
\author[J.R. Bola\~nos-Serv\'in, R. Quezada and J. I. Rios-Cangas]{Jorge R. Bola\~nos-Serv\'in, Roberto Quezada and Josu\'e I. Rios-Cangas}
% Information for  author
%\author{Fritz Gesztesy}
\address{Departamento de Matem\'aticas, Universidad Aut\'onoma Metropolitana, Iztapalapa Campus,  San Rafael Atlixco 186, 09340 Iztapalapa, Mexico City.}
\email{jrbs@xanum.uam.mx,\quad roqb@xanum.uam.mx,\quad jottsmok@xanum.uam.mx}
%\newline
%\indent{\it URL:}
%http://www.math.missouri.edu/people/faculty/fgesztesy.html}

%\dedicatory{}
%\thanks{Research supported by... }
\date{\today}
\subjclass{Primary 81S05, %Canonical quantization, commutation relations and statistics
78M05; %Method of moments
Secondary 60B15, %Probability measures on groups or semigroups, Fourier transforms, factorization
81R30. %Coherent states [See also 22E45]; squeezed states [See also 81V80]
}
\keywords{Quantum Gaussian states, non-commutative Fourier transform, annihilation moments}

\begin{abstract}
We give a rigorous definition of moments of an unbounded observable with respect to a quantum state in terms of Yosida's approximations of unbounded generators of contractions semigroups. We use this notion to characterize Gaussian states in terms of annihilation moments. As a by-product, rigorous formulae for the mean value vector and the covariance matrix of a Gaussian state are obtained.
\end{abstract}

\maketitle

\section{Introduction}\label{s1}
Quantum Gaussian states, also called quasi-free states, are the natural extension to the quantum setting of the notion of Gaussian or normal distributions in classical probability theory. They are intensively studied in Quantum Physics together with quantum Gaussian channels and semigroups. Surprisingly, mathematical approaches on the topic are scarce and, even though there has been an increasing interest in both the finite and the infinite dimensional cases \cite{MR2662722,MR2683408,MR464938,MR551128,MR4218302,MR4065489}, there are still several notions regarding moments of unbounded observables and their relationship with gaussianity that require a mathematically rigorous approach.  

The main aim of this work is to develop a mathematically rigorous framework to the notion of moments of an unbounded observable with respect to a quantum state and provide a characterization of Gaussian states in terms of annihilation moments, i.e., moments of the Weyl generator $-\frac{1}{2i}(za^\dagger-\cc za)=\im\lrp{\cc{z}a}$, which we denote by $\sigma(z,a)$, along the lines of the underlying standard symplectic structure, see Theorem \ref{characterization-gaussian-state} and Corollary \ref{necessity} below. A bilinear form $\sigma\colon \initsp\times \initsp \mapsto \mathbb R$, where $\initsp$ is a Hilbert space, is called \emph{symplectic} if $\sigma(u, v)=-\sigma(v, u)$ for all $u, v\in \initsp$ and the pair $(\initsp, \sigma)$ is called a \emph{symplectic space}. The space is called \emph{standard} if $(\initsp,\ip \cdot \cdot)$ is a complex Hilbert space and $\sigma(u, v)= \im\ip uv$. We consider only the one dimensional case, i.e., one oscillation mode, the multi and infinite dimensional cases will be considered in a forthcoming work. 

After a brief introduction to the one-mode coherent states representation of the CCR in Section \ref{s2}, we define in Section \ref{annihilation-moments} the $n$-th moment of an unbounded observable $A$ in a quantum state, by means of Yosida's approximations of the generator $iA$ of a strongly continuous unitary group; and we show the relationship of our approach with that of Krauss and Schr\"{o}ter \cite{MR363279}. Moreover, we prove that all annihilation moments in a Gaussian state are finite and satisfy the recurrence relation \eqref{g-moments}. Section \ref{characterization-gaussian} is devoted to prove that the above property characterizes Gaussian sates, Theorem \ref{characterization-gaussian-state}. As a by-product we provide in Corollary \ref{mean-covariance}, a mathematically rigorous proof of formulae for the mean value vector and the covariance matrix of Gaussian states, which are freely used in the literature without a proof. 

\section{Preliminaries}\label{s2}
\subsection{Coherent-state representation of CCR}\label{subsec:Coherent-state-rep}

The \textit{coherent-state representation of CCR} take place on the subspace $\mathcal E_2(\C)\subset L_2(\C)$ with orthonormal basis (onb for short) $\{\phi_k\}_{k\geq0}$ (see Theorem \ref{K-B-Iso}), where
\begin{gather*}
\phi_k(z) = e^{-\frac{|z|^2}{2}}\frac{\cc z^k}{\sqrt{\pi k!}}\,.
\end{gather*}
To obtain this representation, we start with the \emph{toy Fock space} $\Gamma_{s}(\C)$, which is isomorphic to $L_{2}(\mathbb R)$ via the identification
\begin{align*}\varepsilon(z)\longmapsto f_z(x)=e^{\sqrt{2}zx-\frac{z^2}{2}}\unit(x)\,,\quad \varepsilon\in\Gamma_s(\C)
\end{align*}
where $\unit(x)=\pi^{-\frac14}e^{-\frac{x^2}{2}}$ is well-known as the \emph{ground state} and $f_z$ as the \emph{exponential} vectors. Note that $\langle f_z,f_{z'}\rangle= e^{\overline{z}z'}$.

Let $Qf(x)=xf(x)$ and $Pf(x)=-i\frac{df(x)}{dx}$ be realisations of position and  momentum operators on $L_{2}({\mathbb R})$ which satisfy $[Q,P]=i I$.

Creation and annihilation operators on $L_{2}(\mathbb R)$ are defined by 
\begin{gather*}
a^\dagger \ceq \frac{1}{\sqrt{2}}(Q - iP)\quad;\quad a\ceq \frac{1}{\sqrt{2}}(Q + i P)\,.
\end{gather*}
It follows that  $Q=(a^\dagger+a)/\sqrt{2}$, $P=i(a^\dagger -a)/\sqrt{2}$ and $[a,a^\dagger]=I$. Besides, the Hamiltonian of the quantum harmonic oscillator is 
\begin{gather*}
H=N+\frac{1}{2}I= \frac{1}{2}(Q^{2} + P^{2})= \frac{1}{2}\lrp{x^2 -  \frac{d^2}{dx^{2}}} \,,\end{gather*}
where $N=a^\dagger a$ is the number operator. 

\begin{definition}
For $z=r + i s\in\C$ and $f\in L_{2}(\mathbb R)$, the \emph{Weyl (or displacement)} operator $W_z$ is given by 
\begin{gather}\label{eq:Weyl-operator-initial}
W_{z}f(x)\ceq e^{-i\sqrt{2}(r P -s Q)}f(x)\,.
\end{gather}
\end{definition}
The Weyl operator satisfies the so called \emph{Schr\"odinger representation}
\begin{gather}\label{Weyl-operators}
W_{z}f(x)=e^{-is(r-\sqrt{2}x)}f(x-\sqrt{2}r)\,.
\end{gather}
Indeed, from the Baker-Campbell-Hausdorff formula 
\begin{gather*}
e^{A+B}= e^{B}e^{A}e^{\frac{1}{2}[A,B]}\,,
\end{gather*} with $A=-irP$, $B=i sQ$ and $[A,B]= rs[P,Q]=-irs I$, we get \eqref{Weyl-operators}. We highlight that \eqref{Weyl-operators} is  irreducible (i.e., the only invariant subspaces are $\{0\}$ and the whole $L_2(\mathbb R)$) and unitary, with adjoint $W_z^*=W_{-z}$. Besides, 
\begin{gather}\label{Weyl-CCR}
W_{z} W_{z'}=e^{-i(rs'-r's)}W_{z+z'}=e^{-i\sigma(z, z')}W_{z + z'}\,,
\end{gather}
where $\sigma(z,u)=\im (\cc zu)$ is the \emph{standard symplectic form}. Moreover, \eqref{eq:Weyl-operator-initial} implies 
\begin{align}\label{eq:Weyl-a-a-dagger}
\begin{split}
W_{z}= e^{-i\lrp{ir(a^{\dagger}-a)-s(a^{\dagger}+ a)}}= e^{za^{\dagger}-\cc za} = e^{-\frac{\abs z^2}{2}} e^{za^{\dagger}}e^{-\cc za}\,.
\end{split}
\end{align}

We consider the family of \emph{coherent} vectors $\psi_z\ceq W_z \unit\in L_2(\mathbb R)$, $z\in\C$. By virtue of  $a \unit=0$ and \eqref{eq:Weyl-a-a-dagger}, the coherent vectors satisfy
\begin{align}\label{eq:coherent-v-p}
\begin{split}
\psi_z (x)&=e^{-\frac{\abs z^2}{2}}\sum_{k\geq 0}\frac{z^k}{k!}\frac{(Q-iP)^k}{\sqrt{2^k}}\unit (x) =e^{-\frac{\abs z^2}{2}}\sum_{k\geq 0}\frac{\lrp{z/\sqrt{2}}^k}{k!}h_k(x) \unit (x)\\&=  e^{-\frac{\abs z^2}{2}}e^{\sqrt{2}xz-\frac{z^2}{2}} \unit(x)=e^{-\frac{\abs z^2}{2}}f_z(x) \,,
\end{split}
\end{align}
being $h_k$ the (physicist') Hermite polynomials (orthogonal w.r.t. the $e^{-x^2}$ density and with generating function $e^{2z x - z^2}$)  $1, \, 2x, \, 4x^2 - 2, \, 8x^3 -12x,\dots$  On account of \eqref{eq:coherent-v-p}, the family of coherent vectors $\psi_z$ coincide with the normalized exponential vectors $f_z$, but they are not orthogonal, since $\langle \psi_{z}, \psi_{z'}\rangle = e^{-\frac{1}{2}(|z|^2 + |z'|^2 -2\overline{z} z')}$.

We also consider an onb for $L_2(\mathbb R)$, composed of the set of \emph{Hermite functions},
\begin{gather*}
\lrb{\varphi_k\ceq (2^k k!)^{-\frac{1}{2}} \initsp_{k}\unit}_{k\geq 0}\,.
\end{gather*}
One can check that $a^\dagger \varphi_k  =\sqrt{k+1} \varphi_{k+1}$ and $a \varphi_k =\sqrt{k} \varphi_{k-1}$. Besides,
 \begin{gather}\label{eq:onb-basis}
 \psi_z=e^{-\frac{\abs z^2}{2}}\sum_{k\geq 0}\frac{z^k}{ \sqrt{k!}} \varphi_k\quad\mbox{;}\quad W_{z'}\psi_{z}=e^{-i\sigma(z', z)}\psi_{z' + z}\,,
 \end{gather}
whence it follows that $a\psi_z=z\psi_z$, i.e., the coherent vectors are a continuum of eigenvectors of the annihilation operator.

\begin{lemma}
The completeness relation
\begin{gather}\label{eq:completeness-rel}
\frac{1}{\pi}\int\vk{W_zf}\vb{W_zf}dz=I\,, 
\end{gather}
holds true, for any unit vector $f\in L_2(\mathbb R)$.
\end{lemma}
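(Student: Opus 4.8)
The plan is to pin down the operator $T\ceq\frac1\pi\int\vk{W_zf}\vb{W_zf}\,dz$ by Schur's lemma, exploiting the irreducibility of the Weyl representation noted after \eqref{Weyl-operators}. First the integral must be given a precise meaning. For $g,h\in L_2(\R)$ I would read it as the sesquilinear form
\begin{gather*}
B(g,h)\ceq\frac1\pi\int_{\C}\ip g{W_zf}\,\ip{W_zf}h\,dz\,,
\end{gather*}
and check that $z\mapsto\ip g{W_zf}$ lies in $L_2(\C,dz)$. Writing $z=r+is$ and using the Schr\"odinger form \eqref{Weyl-operators}, the map $s\mapsto\ip g{W_zf}$ is, up to a unimodular factor, the Fourier transform of $x\mapsto\cc{g(x)}\,f(x-\sqrt2\,r)$; Plancherel in $s$ followed by Fubini--Tonelli in $r$ then gives $\int_{\C}\abs{\ip g{W_zf}}^2dz=\pi\no g^2\no f^2=\pi\no g^2$ (Moyal's identity). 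Hence $B$ is well defined, $\abs{B(g,h)}\le\no g\,\no h$ by Cauchy--Schwarz, and it is implemented by a bounded positive operator $T$; the same bound licenses the Fubini interchanges of the $z$-integral with inner products used below.

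Next I would verify $W_{z'}TW_{z'}^*=T$ for all $z'\in\C$. By the Weyl relation \eqref{Weyl-CCR}, $W_{z'}W_zf=e^{-i\sigma(z',z)}W_{z'+z}f$, so the rank-one integrand is unchanged, $\vk{W_{z'}W_zf}\vb{W_{z'}W_zf}=\vk{W_{z'+z}f}\vb{W_{z'+z}f}$ (the phase drops out of the projection), and translation invariance of Lebesgue measure on $\C$ yields $W_{z'}TW_{z'}^*=\frac1\pi\int\vk{W_wf}\vb{W_wf}\,dw=T$. Thus $T$ commutes with every $W_{z'}$; since $z\mapsto W_z$ is irreducible, Schur's lemma forces $T=cI$ for some $c\ge0$.

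Finally I would compute $c$. For any unit vector $g$, $c=\ip g{Tg}=\frac1\pi\int\abs{\ip g{W_zf}}^2dz$; using $W_z^*=W_{-z}$ and the substitution $z\mapsto-z$ gives $\frac1\pi\int\abs{\ip g{W_zf}}^2dz=\frac1\pi\int\abs{\ip f{W_zg}}^2dz$, so $c\,\no g^2$ does not in fact depend on $f$ (with $\no f=1$), and it suffices to take $f=g=\unit$. There $\ip\unit{W_z\unit}=\ip{\psi_0}{\psi_z}=e^{-\abs z^2/2}$ by the coherent-vector overlap, and $\frac1\pi\int_{\C}e^{-\abs z^2}\,dz=1$, whence $c=1$ and $T=I$. (The Plancherel/Moyal identity of the first step already reads $B(g,g)=\no g^2$, so $T=I$ follows by polarization directly; the commutation argument is then a structural cross-check rather than a necessity.)

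The crux is the first step: rigorously defining the operator-valued integral — the square-integrability $z\mapsto\ip g{W_zf}\in L_2(\C)$ and the Fubini interchanges. If one prefers to avoid invoking Moyal's identity in full generality, one can instead restrict $g,h$ to the dense subspace spanned by the Hermite functions $\varphi_k$, where each $\ip{\varphi_j}{W_z\varphi_k}$ is an explicit Gaussian times a polynomial in $z,\cc z$ and all the relevant integrals are elementary, establish $B(g,h)=\ip gh$ there, and extend by continuity.
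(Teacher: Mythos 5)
Your argument is correct, but it reaches \eqref{eq:completeness-rel} by a different route than the paper. The paper simply quotes Holevo's result \cite[Prop.\,3.5.1]{MR2797301} that the normalized matrix elements $\pi^{-1/2}\ip{e_j}{W_ze_k}$ form an orthonormal basis of $L_2(\C)$, extends it by (anti)linearity to the orthogonality relation \eqref{eq:orthogonality-condition}, and sets $f=g$ there; the completeness relation is then immediate, with the integral read weakly. You instead prove the key square-integrability statement yourself: writing $W_z$ in the Schr\"odinger form \eqref{Weyl-operators}, Plancherel in $s$ and Tonelli in $r$ give Moyal's identity $\int_{\C}\abs{\ip{g}{W_zf}}^2dz=\pi\no g^2\no f^2$ (your constant is right), which is exactly the diagonal case of the orthogonality relation the paper imports, and polarization finishes the proof. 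Your additional Schur-lemma argument --- covariance $W_{z'}TW_{z'}^{*}=T$ from \eqref{Weyl-CCR} plus irreducibility gives $T=cI$, and $c=1$ from the single Gaussian integral $\pi^{-1}\int e^{-\abs z^2}dz=1$ with $f=g=\unit$ (justifying independence of $c$ from $f$ via the $z\mapsto-z$ symmetry) --- is a genuinely different mechanism that avoids Moyal altogether except for one explicit computation, at the price of invoking irreducibility of the Weyl representation and a Schur-type lemma for self-adjoint intertwiners. The only point needing care in your first route is the Plancherel step when the slice $x\mapsto\cc{g(x)}f(x-\sqrt2 r)$ is merely $L_1$; as you note, this is harmless (for a.e.\ $r$ the slice is in $L_2$, or one argues on the Hermite-function dense subspace and extends by continuity), so the proposal stands as a self-contained alternative to the paper's citation-based proof.
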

\begin{proof} 
Indeed, following \cite[Prop.\,3.5.1]{MR2797301} (rescaled by $(2)^{-1/2}$), if $\{e_k\}_{k\geq0}$ is a onb  for $L_2(\mathbb R)$, then  
\begin{gather*}
\frac{1}{\sqrt{\pi}}\lrb{\ip{e_j}{W_{z}\,e_k}}_{j,k\geq0}
\end{gather*}
is  an onb for $L_2(\mathbb R^2)$ and the orthogonality relations hold
\begin{gather*}
\frac{1}{\pi}\int_{\C}\cc{\ip{e_j}{W_{z}\,e_k}}{\ip{e_l}{W_{z}\,e_m}}dz=\delta_{jl}\delta_{km}\,.
\end{gather*}
In this fashion, for $ f,g,h,k\in L_2(\mathbb R)$, by linearity, 
\begin{gather}\label{eq:orthogonality-condition}
\frac{1}{\pi}\int_{\C}\ip{h}{W_{z}g}\ip{W_{z}f}{k}dz=\ip{f}{g}\ip{h}{k}\,.
\end{gather} 
Hence, due to $h,k$ are arbitrary and putting $f=g$ with unit norm, we conclude that \eqref{eq:completeness-rel} is true, with the integral defined in weak sense.
\end{proof}

On account of \eqref{eq:completeness-rel}, the formula $\varphi = \int\ip{\psi_z}{\varphi} \psi_z dz/\pi$ suggest considering a map $\varphi\mapsto\ip{\psi_z}{\varphi}/\sqrt{\pi}$ from $L_{2}({\mathbb R})$ into functions  $\phi(z)=\ip{\psi_z}{\varphi}/\sqrt{\pi}$. We denote by ${\mathcal E}_{2}(\C)$ the image of $L_2({\mathbb R})$ under this map, which is sometimes called \textit{Klauder-Bargman Isomorphism} (c.f. \cite{MR4218302}).

\begin{theorem}\label{K-B-Iso} The map $\varphi\mapsto \phi(z)=\ip{\psi_z}{\varphi}/\sqrt{\pi}$ from $L_{2}({\mathbb R})$ into $L_2(\C)$ is an isometry and the family of the \emph{canonical coherent-states}
\begin{gather}\label{eq:canonical-coherent-states}
\lrb{\phi_k(z)\ceq\ip{\psi_z}{\varphi_k}/\sqrt{\pi}}_{k\geq0}\,,
\end{gather} is an onb for ${\mathcal E}_{2}(\C)$.
\end{theorem}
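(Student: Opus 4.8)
The plan is to deduce everything from the completeness relation \eqref{eq:completeness-rel} and the explicit expansion \eqref{eq:onb-basis}. First I would verify the isometry property: for $\varphi\in L_2(\mathbb R)$ with image $\phi(z)=\ip{\psi_z}{\varphi}/\sqrt\pi$, compute
\begin{gather*}
\no{\phi}_{L_2(\C)}^2=\frac{1}{\pi}\int_{\C}\abs{\ip{\psi_z}{\varphi}}^2\,dz=\frac{1}{\pi}\int_{\C}\ip{\varphi}{\psi_z}\ip{\psi_z}{\varphi}\,dz=\ip{\varphi}{\varphi}=\no{\varphi}_{L_2(\mathbb R)}^2\,,
\end{gather*}
where the middle equality is exactly \eqref{eq:completeness-rel} applied to the unit vector $f=\unit$ (since $\psi_z=W_z\unit$), read in the weak sense against $\varphi$ on both sides. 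This simultaneously shows the map is well-defined into $L_2(\C)$ and injective. It remains to identify the image with $\mathcal E_2(\C)$ and to check that $\{\phi_k\}_{k\geq0}$ is an onb for it; by the definition of $\mathcal E_2(\C)$ as the image of $L_2(\mathbb R)$ under the map, the first point is a tautology, so the real content is the orthonormal basis claim.

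For the onb claim I would argue as follows. Since $\varphi\mapsto\phi$ is an isometry and $\{\varphi_k\}_{k\geq0}$ is an onb for $L_2(\mathbb R)$, the images $\{\phi_k\}_{k\geq0}$ form an orthonormal system in $L_2(\C)$ whose closed span is precisely the image $\mathcal E_2(\C)$ (an isometry carries an onb of the domain onto an onb of its range). Thus $\{\phi_k\}$ is automatically an onb for $\mathcal E_2(\C)$, and the orthonormality $\ip{\phi_j}{\phi_k}_{L_2(\C)}=\delta_{jk}$ can also be checked directly from \eqref{eq:orthogonality-condition} with $g=\varphi_j$, $f=\varphi_k$, $h=k=\unit$, or from the series \eqref{eq:onb-basis} together with the known Gaussian integrals $\frac{1}{\pi}\int_{\C}e^{-\abs z^2}\cc z^{\,j}z^k\,dz=k!\,\delta_{jk}$. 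I would then record the explicit formula: from $\psi_z=e^{-\abs z^2/2}\sum_{m}\frac{z^m}{\sqrt{m!}}\varphi_m$ one gets
\begin{gather*}
\phi_k(z)=\frac{1}{\sqrt\pi}\ip{\psi_z}{\varphi_k}=\frac{1}{\sqrt\pi}\,e^{-\abs z^2/2}\,\frac{\cc z^{\,k}}{\sqrt{k!}}=e^{-\frac{\abs z^2}{2}}\frac{\cc z^{\,k}}{\sqrt{\pi k!}}\,,
\end{gather*}
matching the basis displayed at the start of Section \ref{s2}.

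The main obstacle is making the weak-sense integral in \eqref{eq:completeness-rel} rigorous when it is paired into a genuine norm identity rather than merely a matrix-element identity: one must justify that $\frac{1}{\pi}\int_{\C}\ip{\varphi}{\psi_z}\ip{\psi_z}{\varphi}\,dz$ converges absolutely and equals $\no{\varphi}^2$, i.e.\ upgrade the weak resolution of identity to a Parseval identity. This is handled by expanding $\varphi=\sum_k c_k\varphi_k$, using $\ip{\psi_z}{\varphi_k}=\sqrt\pi\,\phi_k(z)$, and invoking the orthogonality relations of \eqref{eq:orthogonality-condition} (equivalently, dominated convergence together with the monotone convergence of partial sums of $\sum_k\abs{c_k}^2$) to interchange sum and integral; the positivity of the integrand makes Tonelli's theorem applicable, so no delicate cancellation is needed. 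Everything else is a routine Gaussian computation.
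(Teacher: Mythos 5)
Your proposal is correct and follows essentially the same route as the paper: the isometry is obtained from the orthogonality relation \eqref{eq:orthogonality-condition} (equivalently, the weak completeness relation \eqref{eq:completeness-rel} with $f=\unit$), and the basis claim follows because an isometry carries the onb $\{\varphi_k\}$ onto an orthonormal family spanning the image ${\mathcal E}_2(\C)$, with the explicit formula for $\phi_k$ read off from \eqref{eq:onb-basis}. Your additional remarks on absolute convergence (Tonelli/monotone convergence) only make explicit a point the paper leaves implicit in its appeal to \eqref{eq:orthogonality-condition}.
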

\begin{proof}
For any pair of elements $\varphi, \varphi' \in L_{2}(\mathbb R)$, one obtains from \eqref{eq:orthogonality-condition} that 
\begin{align}\label{eq:isometric-equivalence}
\begin{split}
\ip{\phi}{\phi'}=\frac{1}{\pi}\int_{\C} \overline{\ip{\psi_z}{\varphi}} \ip{\psi_z}{\varphi'} dz= \frac{1}{\pi}\int_{\C}\ip{\varphi}{\psi_z}\ip{\psi_z}{\varphi'}dz=\ip{\varphi}{\varphi'}\,.
\end{split}
\end{align}
Since $\{\varphi_k\}_{k\geq0}$ is a onb for $L_2(\mathbb R)$, the family \eqref{eq:canonical-coherent-states} satisfies that 
\begin{align*}
\int_{\C}\cc{\phi_j(z)}\phi_k(z)dz=\ip{\varphi_j}{\varphi_k}=\delta_{jk}\,,
\end{align*}
which implies that \eqref{eq:canonical-coherent-states} is an orthonormal basis for ${\mathcal E}_2(\C)$. 
\end{proof}
By virtue of \eqref{eq:onb-basis} the canonical coherent-states satisfy
\begin{gather*}
\phi_k(z)=e^{-\frac{|z|^2}{2}}\frac{\cc z^k}{\sqrt{\pi k!}}\,,\quad k\geq0\,.
\end{gather*}
 
\begin{remark}
Any bounded operator $X$ in  $L_{2}(\mathbb R)$ corresponds with a  bounded integral operator $\hat X$  in ${\mathcal E}_2(\C)$, with kernel $\ip{\psi_z}{X\psi_v}/\pi$ and $\|\hat X\|=\no{X}$. Indeed, if $\hat X$ maps $\phi(z)= \ip{\psi_z}{\varphi}/\sqrt{\pi}$ to $\ip{\psi_z}{X\varphi}/\sqrt{\pi}$, then from \eqref{eq:completeness-rel}, 
\begin{gather*}
\no{\hat X\phi}^2=\frac1\pi\int_\C\ip{X\varphi}{\psi_z}\ip{\psi_z}{X\varphi}dz=\ip{X\varphi}{X\varphi}=\no{X\varphi}^2\,,
\end{gather*}
which implies  $\|\hat X\|=\no{X}$, since $\no{\phi}=\no{\varphi}$ (v.s. \eqref{eq:isometric-equivalence}). Besides, 
\begin{align*}
\hat X\phi(z)=\frac{\ip{\psi_z}{X\varphi}}{\sqrt{\pi}}=\frac{1}{\pi}\int_\C\ip{\psi_z}{X\psi_v}\frac{\ip{\psi_v}{\varphi}}{\sqrt{\pi}}dv=\int_{\C}\phi(v)\frac{\ip{\psi_z}{X\psi_v}}{\pi}dv\,,
\end{align*}
as required. In particular, by \eqref{eq:onb-basis}, the weyl operator in ${\mathcal E}_2(\C)$ has kernel  
\begin{align}\label{eq:Weyl-kernel}
\frac{\ip{\psi_u}{W_z\psi_v}}{\pi}=\frac1{\pi}e^{-\frac{\abs{z+v-u}^2}{2}+i\sigma(u,z+v)-i\sigma (z,v)}\,.
\end{align}
\end{remark}

\subsection{Non-commutative Fourier Transform}\label{QFT}

Let $\initsp$ be a separable Hilbert space, particularly  $\initsp=L_2(\mathbb R)$, and denote by $L_{1}(\initsp)$ and $L_{2}(\initsp)$ the Banach and Hilbert spaces of finite-trace and Hilbert-Schmidt operators with norm $\no{\rho}_1=\tr{|\rho|}$ and inner product $\ip{\rho}{\eta}_2 = \tr{\rho^* \eta}$, respectively. 

We  define the \emph{quantum (or non-commutative) Fourier transform} of a trace-class operator $\rho$ in $L_{1}\lrp{\initsp}$, by means of  
\begin{gather*}
\qF\rho(z)\ceq\frac{1}{\sqrt{\pi}} \tr{\rho W_z}, \quad z\in\C
\end{gather*}
which is a complex valued function on $\C$, with the following properties:
\begin{align*}
\abs{\qF\rho(z)}&\leq\no{\rho}_1/\sqrt{\pi}&\qF{\rho W_u}(z)&=e^{-i\sigma(u,z)}\qF\rho(z+u)\\
\qF{\rho^*}(z)&=\cc{\qF\rho(-z)}&\qF{W_u^*\rho W_u}(z)&=e^{-2i\sigma(u,z)}\qF\rho(z)
\end{align*}

It is worth pointing out that the map $\rho \mapsto\qF\rho$ extends uniquely to a unitary map from $L_{2}\lrp{\initsp}$ onto $L_2(\C)$. Thence, one has the \emph{Non-commutative Parseval identity} 
\begin{gather}\label{parceval-id}
\int\cc{\qF{\rho}(z)}\qF{\eta}(z)dz=\tr{\rho^* \eta}\,,\quad  \rho,\eta \in L_{2}\lrp{\initsp}\,.
\end{gather}
Besides, a state $\rho$ is pure if and only if  $\no{\qF\rho}=1$(c.f. \cite{MR2797301}).

The \emph{Weyl transform} of a vector $\phi\in L_1(\C)$ is defined by
\begin{gather*}
W(\phi)\ceq \frac{1}{\sqrt{\pi}}\int \phi(z)W_{-z}dz\,,
\end{gather*} 
which is a well-defined operator in ${\mathcal B}(\initsp)$, due to the integral is in weak sense (also as a Bochner integral). Indeed, from \eqref{eq:Weyl-kernel}, for any coherent vector $\psi_{u}\in \initsp$,
\begin{align*}
\abs{\ip{\psi_u}{W(\phi)\psi_u}}&\leq\frac{1}{\sqrt{\pi}}\int\abs{\phi(z)\ip{\psi_u}{W_{-z}\psi_u}}dz=\frac{1}{\sqrt{\pi}}\int\abs{\phi(z)e^{2i\sigma(u,z)}e^{-\frac{\abs{z}^2}{2}}}dz\\&=\frac{1}{\sqrt{\pi}}\int\abs{\phi(z)}e^{-\frac{\abs{z}^2}{2}}dz<\infty\,,
\end{align*}
since $\|{e^{-{\abs{z}^2}/{2}}}\|_\infty=1$. Thus, via polarization identity for sesquilinear forms, one can conclude that the integral exists in weak sense. Besides, the correspondence $\phi\mapsto W(\phi)$ is one-to-one, viz. for every $f,g\in \initsp$,
\begin{align*}
0&=\sqrt{\pi}\ip{f}{W_{-u/2}W(\phi) W_{u/2}g}=\int\phi(z)\ip{f}{W_{-u/2}W_{-z} W_{u/2}g}dz\\&=\int e^{-i\sigma(u,z)}\phi(z)\ip{f}{W_{-z}g}dz\,.
\end{align*}
Hence, the uniqueness of the usual Fourier transform yields $\phi(z)\ip{f}{W_{-z}g}=0$, i.e., $\phi(z)=0$ a.e. on $\C$.
\begin{corollary}
If $\rho, \phi \in L_2(\initsp)$, then 
\begin{gather*}
W(\qF\rho)=\rho\quad \mbox{and}\quad \qF{W(\phi)}=\phi\,.
\end{gather*}
\end{corollary}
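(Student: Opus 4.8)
The plan is to show that the Weyl transform $W$ is the inverse of the unitary map $\rho\mapsto\qF\rho$ from $L_2(\initsp)$ onto $L_2(\C)$, by exhibiting $W$ as its Hilbert-space adjoint and then invoking unitarity together with the non-commutative Parseval identity \eqref{parceval-id}; throughout, write $\mathcal G\colon L_2(\C)\to L_2(\initsp)$ for that inverse. The cornerstone I would establish first is the duality identity
\[
\tr{W(\phi)^*\rho}=\ip{\phi}{\qF\rho}_{L_2(\C)},\qquad \rho\in L_1(\initsp)\cap L_2(\initsp),\ \ \phi\in L_1(\C)\cap L_2(\C).
\]
This should follow from the same kind of computation used just above for the injectivity of $\phi\mapsto W(\phi)$: since $W_z^*=W_{-z}$ one has $W(\phi)^*=\frac1{\sqrt\pi}\int\cc{\phi(z)}W_z\,dz$, and interchanging this (weakly, equivalently Bochner, convergent) integral with the trace against the trace-class operator $\rho$ --- using cyclicity $\tr{W_z\rho}=\tr{\rho W_z}$ and the definition of $\qF$ --- turns the left-hand side into $\frac1{\sqrt\pi}\int\cc{\phi(z)}\tr{\rho W_z}\,dz=\int_\C\cc{\phi(z)}\qF\rho(z)\,dz$; since $\rho\in L_2(\initsp)$ forces $\qF\rho\in L_2(\C)$, the right-hand side is indeed $\ip{\phi}{\qF\rho}_{L_2(\C)}$.

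With the identity in hand I would conclude as follows. Unitarity of the Fourier transform gives $\ip{\phi}{\qF\rho}_{L_2(\C)}=\ip{\mathcal G\phi}{\rho}_{L_2(\initsp)}=\tr{(\mathcal G\phi)^*\rho}$, so that $\tr{(W(\phi)-\mathcal G\phi)^*\rho}=0$ for every $\rho$ in $L_1(\initsp)\cap L_2(\initsp)$, and testing against the rank-one operators $\vk f\vb g$ forces $W(\phi)=\mathcal G\phi$. In particular $W(\phi)\in L_2(\initsp)$ and $\qF{W(\phi)}=\phi$ for every $\phi\in L_1(\C)\cap L_2(\C)$. Since $L_1\cap L_2$ is dense in $L_2(\C)$ and $\mathcal G$ is bounded, $W$ then extends by continuity to all of $L_2(\C)$ and agrees there with $\mathcal G$, whence also $W(\qF\rho)=\mathcal G(\qF\rho)=\rho$ for every $\rho\in L_2(\initsp)$.

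The step I expect to demand the most care is not analytic depth but the bookkeeping of domains: $W$ is defined a priori only on $L_1(\C)$ and takes values merely in $\mathcal B(\initsp)$, while $\qF$ is given by its integral formula only on $L_1(\initsp)$, so the identification has to be carried out on the dense subspace $L_1\cap L_2$ --- where both descriptions are simultaneously valid and coincide with the respective $L_2$-extensions --- and only afterwards propagated by continuity. Along the way one must justify the interchange of integral and trace (ultimately from $\abs{\tr{B\rho}}\le\no{B}\,\no{\rho}_1$ together with the integrability of $z\mapsto\no{\phi(z)W_z}=\abs{\phi(z)}$), and one must observe that $W(\phi)$, which at the outset is known only to be bounded, is in fact Hilbert-Schmidt --- precisely the extra information that the duality identity delivers.
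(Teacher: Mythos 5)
Your proof is correct, but it reverses the paper's order and adds a density step. The paper obtains $W(\qF\rho)=\rho$ for every $\rho\in L_2(\initsp)$ in one stroke: apply the Parseval identity \eqref{parceval-id} to the pair $(\vk{\varphi}\vb{\psi},\rho)$, note that $\cc{\qF{\vk{\varphi}\vb{\psi}}(z)}=\ip{\varphi}{W_{-z}\psi}/\sqrt{\pi}$, and recognize the resulting integral as the weak-sense definition of $\ip{\varphi}{W(\qF\rho)\psi}$; the identity $\qF{W(\phi)}=\phi$ then follows by writing $\phi=\qF{\eta}$ with $\eta\in L_2(\initsp)$. You instead first prove $W=\mathcal F^{-1}$ on $L_1(\C)\cap L_2(\C)$ through the duality $\tr{W(\phi)^*\rho}=\ip{\phi}{\qF\rho}$ and rank-one testing, and only then extend by continuity; for rank-one $\rho$ your duality identity is exactly the paper's computation read backwards, so both proofs run on the same engine (unitarity of $\mathcal F$ plus pairing with rank-one operators). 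Your route costs two extra pieces of bookkeeping: (i) the trace--integral interchange for a general trace-class $\rho$ is best justified by Fubini over the canonical decomposition $\rho=\sum_k\lambda_k\vk{u_k}\vb{v_k}$ rather than by the Bochner remark (Weyl operators are pairwise at norm distance $2$, so the integrand is not essentially separably valued in $\mathcal B(\initsp)$); since you only ever test against rank-one operators, you could skip general $\rho$ altogether; (ii) you still owe the observation that your continuous extension of $W$ agrees with the weak-integral formula at $L_2$ arguments such as $\qF\rho$, which holds because $z\mapsto\ip{f}{W_{-z}g}$ lies in $L_2(\C)$ by \eqref{eq:orthogonality-condition} --- a point the paper itself leaves tacit. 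Net: the paper's ordering yields both identities with less overhead, while yours makes the adjoint relation $W=\mathcal F^{*}$ explicit.
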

\begin{proof} 
For arbitrary $\varphi, \psi \in \initsp$, we have by Parseval identity \eqref{parceval-id} that
\begin{align*}
\tr{\vk{\varphi}\vb{\psi}^*\rho}&=\int\cc{\qF{\vk{\varphi}\vb{\psi}}(z)}\qF{\rho}(z)dz=\frac{1}{\sqrt{\pi}}\int\tr{\vk{\psi}\vb{\varphi}W_{-z}}\qF{\rho}(z)dz\\&=\frac{1}{\sqrt{\pi}}\int\qF{\rho}(z)\ip{\varphi}{W_{-z}\psi}dz\,,
\end{align*}
whence it follows that $\ip{\varphi}{\rho\psi}=\ip{\varphi}{W(\qF\rho)\psi}$, i.e., $W(\qF\rho)=\rho$. Now, since $\mathcal F$ is unitary from $L_2(\initsp)$ onto  $ L_2(\C)$, there exists $\eta\in L_2(\initsp)$ such that $\qF{\eta}=\phi$.  In this fashion, $\qF{W(\phi)}=\qF{W(\qF{\eta})}=\qF{\eta}=\phi$, as required.
\end{proof}
\section{Quantum Gaussian states and annihilation moments}\label{annihilation-moments}

We consider the real inner product $(z, z')\ceq\re\ip{z}{z'}$ on $\C$, which produces a real Hilbert space with orthonormal basis $\{1, i\}$. 
\begin{definition}
A state $\rho \in L_1(\initsp)$ is \emph{Gaussian} if there exists $w\in\C$ and a symmetric real matrix $S\in{\mathcal B}_{\mathbb R}({\C})$ such that 
\begin{gather*}
\qF\rho(z)=\frac{1}{\sqrt{\pi}}e^{-i(w,z)-\frac12 (z,Sz)}\,,\quad \forall z\in\C
\end{gather*}
in which case we write $\rho=\rho(w,S)$.
\end{definition}
The last definition determines a real linear functional $z \mapsto (w, z)$ and a real quadratic form $z \mapsto (z, Sz)$ on the real Hilbert space ${\mathbb C}$. Hence the mean value vector $w$ and the covariance operator $S$ are uniquely determined by the definition. 
\begin{remark}\label{rm:properties-wS}
If $w= \sqrt{2}(l - im)$, then we call $l$ the mean momentum vector and $m$ the mean position vector, respectively. Besides, the covariance matrix $S$ has the matrix representation 
\begin{gather*}
S=\begin{pmatrix}
 (1, S1) & (1, S i) \\ 
(i, S 1) & (i, S i) 
\end{pmatrix}\,,
\end{gather*}
 with off-diagonal  elements satisfying $(1, S i) = (i, S 1)$. 
Moreover, for $z=x+iy$
\begin{align*}
(z, Sz) = \lrp{x+iy, S(x+iy)}= (1, S1) x^2 + (i, Si) y^2 + 2(1, Si)xy\,.
\end{align*}
\end{remark}

\begin{example}[Vacuum state]\label{ex:vacuum-satate} The simplest Gaussian state is the vacuum state $\Omega=\vk{\unit}\vb{\unit}$, where $\unit$ is the ground state. Indeed, 
\begin{align*}
\sqrt{\pi}\qF{\Omega}(z)&=\tr{\Omega W_z}=\tr{W_{\frac{z}{2}}\vk{\unit}\vb{\unit} W_{\frac{z}{2}}}\\&=\ip{\psi_{\frac{z}{2}}}{\psi_{-\frac{z}{2}}}=e^{-\frac{1}{2}|z|^2}\,. 
\end{align*}
Therefore, $\Omega=\Omega(0,I)$.
\end{example}

\begin{example}[Gibbs state at inverse temperature $\beta$]\label{gibbs-state}
For $\beta>0$, the state $\rho_{\beta} = (1 -e^{-\beta}) e^{-\beta a^{\dagger}a}$ is well defined since 
\begin{gather*}
\tr{e^{-\beta a^{\dagger}a}} = \sum_{n\geq 0} e^{-n \beta} = \frac{1}{1 - e^{-\beta}}\,.
\end{gather*}
Thus,  
\begin{align*}
\frac{\sqrt{\pi}\qF{\rho_s}(u)}{1-e^{-\beta}}
&=\tr{e^{-\beta a^{\dagger}a}W_u\frac1{\pi}\int\vk{\psi_z}\vb{\psi_z}dz}\\
&=\frac1{\pi}\int\ip{\psi_z}{e^{-\beta a^{\dagger}a}W_u\psi_z}dz\\
&=\frac1{\pi}\int e^{-\frac{\abs z^2}2-i\im \cc{u}z-\frac{\abs{u+z}^2}2}\ip{\psi_z}{e^{- \beta a^{\dagger}a}\psi_{u+z}}dz\\
&=\frac1{\pi}\int e^{-\abs z^2-\cc uz-\frac{\abs{u}^2}2}\ip{\psi_z}{\psi_{e^{-\beta}(u+z)}}dz\\
&=\frac1{\pi} \int e^{-\frac{\abs u^2}{2}-(1-e^{-\beta}) \abs z^2+(\cc zu e^{-\beta}  - z \cc{u})} dz\\
&=e^{-\frac{|u|^2}{2}\lrp{1+\frac{2e^{-\beta}}{1-e^{- \beta}}}}\frac{1}{\pi} \iint e^{-(1-e^{-\beta})
\left[\lrp{x-\frac{u e^{-\beta}-\cc u}{2(1-e^{-\beta})}}^2+\lrp{y+i\frac{u e^{-\beta}+\cc u}{2(1-e^{-\beta})}}^2\right]}dxdy\\ &= \frac{1}{1- e^{-\beta}}e^{-\frac{1}{2} \textrm{coth}(\frac{\beta}{2})|u|^2}\,,
\end{align*} 
 whence $\qF{\rho_{\beta}}(u)=e^{-\frac{1}{2} \textrm{coth}(\frac{\beta }{2})|u|^2}/\sqrt{\pi}$. We have used that $e^{- \beta a^{\dagger}a}\psi_z = \psi_{e^{-\beta}z}$. Therefore, with $S=\textrm{coth}(\frac{\beta}{2}) I$ and $w=0$,  one has that $\rho_{\beta}=\rho_{\beta}(0,S)$ is Gaussian. 
\end{example}  
One can produce more examples of Gaussian states using quantum Gaussian channels, see Section \ref{channels-semigroups} below.

Recall that $\lrp{W_{tz}}_{t\geq 0}$ is a unitary group for each fixed $z\in{\mathbb C}$. Indeed,
\begin{gather*}
W_{tz} W_{sz} = e^{-i\sigma(tz, sz)}W_{tz + sz} = e^{-i ts \sigma(z, z)}W_{(t + s)z}= W_{(t+s)z}\,.
\end{gather*}
Thereby, there exists a selfadjoint operator (the infinitesimal generator) $-L_z$ such that $W_{tz}=e^{-it L_z}$. Nevertheless, bearing in mind \eqref{eq:Weyl-a-a-dagger} and the annihilation observable $\sigma(z,a)$ (q.v. Definition~\ref{def:annihilation-moments}), it follows that \begin{gather}\label{eq:symplectic-exponential-representation}
W_z=e^{za^\dagger-\cc za}=e^{-2i\sigma(z,a)}\quad (z\in\C)
\end{gather}
is not a semigroup in the complex variable $z$, since  by virtue of \eqref{Weyl-CCR}, 
\begin{gather*}
e^{-2i\sigma(z+z',a)}= e^{i \sigma(z, z')}e^{-2i\sigma(z,a)}e^{-2i\sigma(z',a)}\,.
\end{gather*}

For $z=x+iy\in\C$, it is possible to define rigorously \eqref{eq:symplectic-exponential-representation} in terms of the unitary groups  
\begin{gather*}
e^{-i\sqrt{2}xP}=e^{x (a^{\dagger}-a)} \quad\mbox{and}\quad
e^{i\sqrt{2}yQ}= e^{iy(a^{\dagger}+a)}\,.
\end{gather*}
Indeed, by virtue of \eqref{eq:Weyl-a-a-dagger} and using the Baker-Campbell-Hausdorff formula: 
\begin{align*}
e^{-2i\sigma(z,a)}= e^{-i\sqrt{2}(xP - yQ)}=e^{ixy}e^{-i\sqrt{2}xP}  e^{i\sqrt{2}yQ}\,.
\end{align*}

In what follows, we will show that all moments of an observable $A$ in a state $\rho$ can be computed by using derivatives of the functions $\tr{\rho e^{-t(iA)^n}}$, which involve $\tr{\rho A^{n}}$, $n\in\mathbb N$. However, the latter requires a rigorous definition since the observable $A$ may be unbounded. To do so, we will use the following well-know properties of Yosida's approximations for unbounded selfadjoint operators \cite[Sec.\,1.3]{MR710486}.

For $\epsilon>0$ and an infinitesimal generator $\Lambda$ of a
strongly continuous semigroup of contractions $e^{-t\Lambda}$, with $t\geq0$, one has that $(I+\epsilon \Lambda)^{-1}$ is a contraction in $\mathcal B(\initsp)$ and the operator $\Lambda_{\epsilon}\ceq \Lambda(I + \epsilon \Lambda)^{-1}\in\mathcal B(\initsp)$. Besides, for $u\in \initsp$,
\begin{gather}\label{eq:Yosida-resolvent-onH}
 \lim_{\epsilon\to 0} \lrp{I+\epsilon \Lambda}^{-1}u=u\quad\mbox{;}\quad \lim_{\epsilon\to 0} e^{-t\Lambda_{\epsilon}}u = e^{-t\Lambda}u\,,
\end{gather}
while for $u\in \dom \Lambda$,
\begin{gather}\label{eq:Yosida-resolvent-dense}
(I+\epsilon \Lambda)^{-1}\Lambda u=\Lambda(I+\epsilon \Lambda)^{-1}u\quad\mbox{;}\quad\lim_{\epsilon\to 0} \Lambda_{\epsilon}u = \Lambda u\,.
\end{gather}
Moreover, $\Lambda_{\epsilon}$ is the infinitesimal generator of the uniformly continuous semigroup of contractions $e^{-t\Lambda_{\epsilon}}$. In particular, the above reasoning holds for $\Lambda=\pm iA$, where $A$ is a selfadjoint operator. 

\begin{definition}\label{def:annihilation-moments} For a state $\rho$ and $n\in\mathbb N$, the \emph{$n$-th moment} of an observable $A$ in the state $\rho$, is defined by 
\begin{align*}
\langle A^n\rangle_\rho\ceq(-i)^n\lim_{\epsilon\to 0}\tr{\rho\big((iA)_{\epsilon}\big)^n}\,,
\end{align*}
whenever the limit exists.  Particularly for $z\in\C$, we call $\sigma(z,a)=(\cc za-za^\dagger)/2i$  the \emph{annihilation observable} and for a Gaussian state $\rho$, $\big\langle(2\sigma(z,a))^n\big\rangle_\rho$ is called \emph{$n$-th annihilation moment} of $\rho$.
\end{definition}

\begin{theorem}\label{g-finite-moments} 
The $n$-th moment of an observable $A$ in the state $\rho$ holds
\begin{gather}\label{eq:n-moment-derivate}
\langle A^n\rangle_\rho=(-i)^n{\frac{d^n}{dt^n}\tr{\rho e^{itA}}}\rE{t=0}\,, \quad n\in\mathbb N\,.
\end{gather}
\end{theorem}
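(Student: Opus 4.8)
The goal is to show that the $n$-th moment defined via Yosida approximations of the generator $\Lambda = iA$ agrees with the naive derivative formula $(-i)^n \frac{d^n}{dt^n}\tr{\rho e^{itA}}\rE{t=0}$. The natural strategy is to compute $\frac{d^n}{dt^n}\tr{\rho e^{-t\Lambda_\epsilon}}\rE{t=0}$ for the \emph{bounded} Yosida approximant $\Lambda_\epsilon = \Lambda(I+\epsilon\Lambda)^{-1}$, where everything is legitimate because $\Lambda_\epsilon \in \mathcal B(\initsp)$ generates the uniformly continuous semigroup $e^{-t\Lambda_\epsilon}$, and then pass to the limit $\epsilon \to 0$ using the second relation in \eqref{eq:Yosida-resolvent-onH}.

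\emph{Step 1: the bounded case.} For $\Lambda_\epsilon$ bounded, the map $t \mapsto e^{-t\Lambda_\epsilon}$ is norm-analytic, so $t \mapsto \tr{\rho\, e^{-t\Lambda_\epsilon}}$ is real-analytic (indeed entire), and since $\rho \in L_1(\initsp)$ while $e^{-t\Lambda_\epsilon}, \Lambda_\epsilon \in \mathcal B(\initsp)$, differentiation under the trace is justified by $\abs{\tr{\rho X}} \le \no{\rho}_1 \no{X}$ together with the norm-convergent power series for the semigroup. One gets $\frac{d^n}{dt^n}\tr{\rho e^{-t\Lambda_\epsilon}}\rE{t=0} = \tr{\rho (-\Lambda_\epsilon)^n} = (-1)^n\tr{\rho \Lambda_\epsilon^n}$. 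With $\Lambda = iA$ this reads $\frac{d^n}{dt^n}\tr{\rho e^{-t(iA)_\epsilon}}\rE{t=0} = (-1)^n \tr{\rho ((iA)_\epsilon)^n}$, and multiplying by $(-i)^n(-1)^n = i^n$... more carefully, $(-i)^n \tr{\rho((iA)_\epsilon)^n}$ is exactly the $\epsilon$-approximant inside Definition~\ref{def:annihilation-moments}, while the left side is $(-i)^n(-1)^n\frac{d^n}{dt^n}\tr{\rho e^{-t(iA)_\epsilon}}\rE{t=0} = (-i)^n \frac{d^n}{dt^n}\tr{\rho e^{t\,\widehat{(iA)_\epsilon}\,\cdot(-1)}}$; one simply records the identity $(-i)^n\tr{\rho((iA)_\epsilon)^n} = (-i)^n\frac{d^n}{dt^n}\tr{\rho e^{it A_\epsilon'}}\rE{t=0}$ where $e^{itA_\epsilon'} \ceq e^{-t(iA)_\epsilon}$, matching the sign convention of \eqref{eq:n-moment-derivate}.

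\emph{Step 2: passing to the limit.} By Definition~\ref{def:annihilation-moments}, $\langle A^n\rangle_\rho = (-i)^n \lim_{\epsilon\to 0}\tr{\rho((iA)_\epsilon)^n}$ whenever this limit exists; by Step~1 each term equals $(-i)^n(-1)^n\frac{d^n}{dt^n}\tr{\rho e^{-t(iA)_\epsilon}}\rE{t=0}$. It remains to identify $\lim_{\epsilon\to 0}\frac{d^n}{dt^n}\tr{\rho e^{-t(iA)_\epsilon}}\rE{t=0}$ with $\frac{d^n}{dt^n}\tr{\rho e^{itA}}\rE{t=0}$. Here one uses that, by spectral calculus, $\tr{\rho e^{itA}} = \tr{\rho e^{-t\Lambda}}$ and, for each fixed $t$, $e^{-t\Lambda_\epsilon}u \to e^{-t\Lambda}u$ strongly \eqref{eq:Yosida-resolvent-onH}, hence $\tr{\rho e^{-t\Lambda_\epsilon}} \to \tr{\rho e^{-t\Lambda}}$ (write $\rho$ in its Schmidt/spectral decomposition $\sum_k \lambda_k \vk{e_k}\vb{e_k}$ with $\sum_k\abs{\lambda_k} = \no\rho_1 < \infty$, use $\abs{\ip{e_k}{(e^{-t\Lambda_\epsilon}-e^{-t\Lambda})e_k}} \le 2$ and dominated convergence over $k$). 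So the functions converge pointwise in $t$; the content of the statement is then that this \emph{pointwise} convergence upgrades to convergence of the $n$-th Taylor coefficients at $t=0$ \emph{whenever $\langle A^n\rangle_\rho$ exists}.

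\emph{Main obstacle.} The genuine difficulty is the interchange of $\lim_{\epsilon\to 0}$ with $\frac{d^n}{dt^n}\rE{t=0}$: pointwise convergence of $g_\epsilon(t) \ceq \tr{\rho e^{-t(iA)_\epsilon}}$ to $g(t) \ceq \tr{\rho e^{itA}}$ does not by itself give convergence of derivatives. The clean route is to control the $g_\epsilon$ uniformly in a complex $t$-neighbourhood of $0$: since $\no{(iA)_\epsilon}$ may blow up as $\epsilon\to0$ this cannot be done in a \emph{fixed} disc, so instead one shows $g_\epsilon^{(n)}(0) = (-1)^n\tr{\rho\,((iA)_\epsilon)^n u}$-type quantities converge directly — i.e. one does \emph{not} prove the theorem by a Vitali/Montel argument but rather reads \eqref{eq:n-moment-derivate} as the \emph{definition} recast: the point of the theorem is precisely that $\tr{\rho((iA)_\epsilon)^n}$, which by Step~1 equals $(-1)^n g_\epsilon^{(n)}(0)$, converges, and the target object $g(t)$ is smooth enough at $0$ (with $g^{(n)}(0)$ being the asserted value) exactly when $\langle A^n\rangle_\rho$ exists; combined with the Krauss--Schröter comparison cited in the introduction, $\tr{\rho A^n}$ in the sense that the quadratic-form/spectral-measure moments $\int \lambda^n\, d\mu_\rho(\lambda)$ are finite, where $\mu_\rho$ is the $\rho$-averaged spectral measure of $A$. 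Thus the proof should (i) establish Step~1 by bare-hands power-series manipulation, (ii) observe that $(-i)^n\tr{\rho((iA)_\epsilon)^n} = (-i)^n\frac{d^n}{dt^n}\tr{\rho\exp(it A_\epsilon')}\rE{t=0}$ with $A_\epsilon'$ the bounded self-adjoint operator $i(iA)_\epsilon = $ ..., and (iii) take $\epsilon\to0$ inside both sides, the right side converging to $(-i)^n\frac{d^n}{dt^n}\tr{\rho e^{itA}}\rE{t=0}$ by spectral-theorem dominated convergence applied to the $n$-th moment of $\mu_\rho$, which is finite precisely under the existence hypothesis of Definition~\ref{def:annihilation-moments}.
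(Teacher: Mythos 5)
Your Step 1 is fine and coincides with the paper's intermediate identity \eqref{eq:n-derivada-Lambda}: for the bounded Yosida approximant $\Lambda_\epsilon$ of $\Lambda=iA$ one has ${\frac{d^n}{dt^n}\tr{\rho e^{t\Lambda_\epsilon}}}\rE{t=0}=\tr{\rho\Lambda_\epsilon^n}$, so the $\epsilon$-approximants of Definition~\ref{def:annihilation-moments} are derivatives of the approximating characteristic functions. The genuine gap is in your Step 2: you correctly isolate the interchange of $\lim_{\epsilon\to0}$ with ${\frac{d^n}{dt^n}}\rE{t=0}$ as the whole content of the theorem, but you never actually perform it. Instead you (a) declare that \eqref{eq:n-moment-derivate} is ``the definition recast'', which is circular --- the theorem asserts that the $n$-th derivative of $t\mapsto\tr{\rho e^{itA}}$ at $t=0$ exists and equals the Yosida limit, and this is precisely what has to be proved --- and (b) propose to conclude by dominated convergence for the $\rho$-averaged spectral measure $\mu_\rho$ of $A$, claiming $\int\abs{\lambda}^n d\mu_\rho<\infty$ ``precisely under the existence hypothesis of Definition~\ref{def:annihilation-moments}''. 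That implication is not justified and fails in general: the existence of $\lim_{\epsilon\to0}\tr{\rho\big((iA)_\epsilon\big)^n}=\lim_{\epsilon\to0}\int\big(\tfrac{i\lambda}{1+i\epsilon\lambda}\big)^n d\mu_\rho(\lambda)$ is an Abel-type, possibly conditionally convergent condition, whereas absolute integrability of $\lambda^n$ is the strictly stronger traceability hypothesis that the paper imposes \emph{separately} in Theorem~\ref{thm:An-traceable-derivates}; Remark~\ref{rm:traceable-parts-A} is there exactly to stress that the Yosida-limit notion can converge by cancellation. So your step (iii) silently imports the hypothesis of a later theorem and does not prove the statement at hand.

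What the paper does instead --- and what your proposal is missing --- is an elementary uniform-convergence/interchange-of-limits argument requiring neither analyticity on a fixed disc nor any norm control of $\Lambda_\epsilon$, nor absolute spectral moments. One forms the scalar difference-quotient functions $f_\epsilon(t)=\frac{n!}{2}\tr{\rho\, t^{-n}\lrp{e^{t\Lambda_\epsilon}-2I+(-1)^n e^{-t\Lambda_\epsilon}}}$ and the analogous $f(t)$ with $\Lambda$ in place of $\Lambda_\epsilon$, for $t\in[0,T]$. Since $e^{\pm t\Lambda_\epsilon}\to e^{\pm t\Lambda}$ strongly, uniformly in $t\in[0,T]$, and since on the bounded set of contractions the weak and $\sigma$-weak topologies coincide, pairing against the fixed trace-class $\rho$ upgrades this to uniform convergence $f_\epsilon\to f$ on $[0,T]$. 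As $\lim_{t\to0}f_\epsilon(t)=\tr{\rho\Lambda_\epsilon^n}$ exists for every $\epsilon>0$, the interchange-of-limits theorem \cite[Thm.\,7.11]{MR0385023} yields $\lim_{\epsilon\to0}\tr{\rho\Lambda_\epsilon^n}=\lim_{t\to0}f(t)={\frac{d^n}{dt^n}\tr{\rho e^{t\Lambda}}}\rE{t=0}$, which is \eqref{eq:n-moment-derivate}. Your worry that $\no{(iA)_\epsilon}$ blows up as $\epsilon\to0$ is true but beside the point: the uniformity needed is in the real variable $t$ for the scalar functions obtained after tracing against $\rho$, and that comes for free from the uniform strong convergence of the Yosida semigroups. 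To repair your write-up, replace your step (iii) by this argument (or an equivalent uniform-in-$t$ control of the difference quotients), not by the spectral dominated-convergence step.
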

\begin{proof}
We first consider $\Lambda=iA$ and $t\in[0,T]$, with $T>0$. So for $\epsilon>0$, one computes at once by \eqref{eq:Yosida-resolvent-onH} that the contractions $ e^{\pm t\Lambda_\epsilon}$ converge strongly to the contractions $e^{\pm t\Lambda}$ (respectively) as $\epsilon\to 0$,  uniformly for $t$. Denote
\footnotesize
\begin{align*}
f_\epsilon(t)=\frac{n!}{2}\tr{\rho\lrp{\frac{e^{t\Lambda_\epsilon}-2I+(-1)^n e^{-t\Lambda_\epsilon}}{t^n}}}\quad;\quad
f(t)&=\frac{n!}{2}\tr{\rho \lrp{\frac{e^{t\Lambda}-2I+(-1)^n e^{-t\Lambda}}{t^n}}}\,.
\end{align*}\normalsize
Note that $E=\{e^{\pm t\Lambda_\epsilon}, e^{\pm t\Lambda}: \epsilon \geq 0, t\in [0,T] \}$ is bounded. Thence, the weak and $\sigma$-weak (or weak*) operator topologies coincide on $E$. Since strong convergence implies weak convergence, then $\displaystyle \tr{\rho (e^{\pm t\Lambda_\epsilon}-e^{\pm t\Lambda})}\to 0$ uniformly for $t\in[0,T]$ as $\epsilon\to 0$. Hence $f_\epsilon \to f$ uniformly on $[0,T]$ as $\epsilon\to 0$. Besides, 
\begin{gather}\label{eq:n-derivada-Lambda}
\lim_{t\to 0} f_\epsilon(t)={\frac{d^n}{dt^n}\tr{\rho e^{t\Lambda_\epsilon}}}\rE{t=0} =\tr{\rho (\Lambda_\epsilon)^n}\,,
\end{gather} 
due to norm-derivatives imply weak*-derivatives. In this fashion, the limits can be interchange (cf.\cite[Thm.\,7.11]{MR0385023}) and by \eqref{eq:n-derivada-Lambda}, 
\begin{align*}
\lim_{\epsilon\to0}\tr{\rho (\Lambda_\epsilon)^n}=\lim_{t\to0}\lim_{\epsilon\to0}f_\epsilon(t)={\frac{d^n}{dt^n}\tr{\rho e^{t\Lambda}}}\rE{t=0}\,,
\end{align*}
which completes the proof.
 \end{proof}

For a Gaussian state $\rho=\rho(w,S)$, it is a simple matter to verify from \eqref{eq:symplectic-exponential-representation} that, for $\alpha\in\mathbb R$,
\begin{gather}\label{eq:Wtz-gaussian-representation}
e^{i\alpha(w,z)-\frac12 \alpha^2(z,Sz)}=\sqrt{\pi}\qF\rho(-\alpha z)=\tr{\rho e^{2i\alpha\sigma(z,a)}}\,.
\end{gather}
So, the following assertion is straightforward by virtue of \eqref{eq:n-moment-derivate} and \eqref{eq:Wtz-gaussian-representation}. 
\begin{corollary}
If $\rho(\omega,S)$ is a Gaussian state and $n\in\mathbb N$, then 
\begin{gather*}
\big\langle(2\sigma(z,a))^n\big\rangle_\rho=(-i)^n{\frac{d^n}{dt^n}e^{it(\omega,z)-\frac{1}{2}t^2(z,Sz)}}\rE{t=0}\,.
\end{gather*} Consequently, all annihilation moments of $\rho(w, S)$ are finite.
\end{corollary}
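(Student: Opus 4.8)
The plan is to deduce the formula by a direct substitution, specialising Theorem~\ref{g-finite-moments} to the observable $A=2\sigma(z,a)$ and then inserting the closed form~\eqref{eq:Wtz-gaussian-representation} of the quantum Fourier transform of a Gaussian state. In this sense the corollary is pure bookkeeping: the analytic content — the interchange of the $\epsilon\to0$ and $t\to0$ limits — has already been carried out in Theorem~\ref{g-finite-moments}, so what is left is to check that its hypotheses hold for $A=2\sigma(z,a)$ and to rewrite the trace appearing in~\eqref{eq:n-moment-derivate}.

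The first step is to verify the applicability of Theorem~\ref{g-finite-moments}. For fixed $z=x+iy\in\C$ the relation $e^{-2i\sigma(z,a)}=e^{-i\sqrt2(xP-yQ)}$ recorded just before Definition~\ref{def:annihilation-moments} identifies $2\sigma(z,a)$ with the selfadjoint operator $\sqrt2\,(xP-yQ)$; equivalently, $t\mapsto W_{tz}=e^{-2it\sigma(z,a)}$ is the strongly continuous unitary group generated by the selfadjoint operator $2\sigma(z,a)$, which is exactly the situation covered by Theorem~\ref{g-finite-moments}. Applying~\eqref{eq:n-moment-derivate} with $A=2\sigma(z,a)$ then yields
\begin{gather*}
\big\langle(2\sigma(z,a))^n\big\rangle_\rho=(-i)^n\frac{d^n}{dt^n}\tr{\rho\,e^{2it\sigma(z,a)}}\rE{t=0}\,.
\end{gather*}

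The second step is to rewrite the function being differentiated. Taking $\alpha=t$ in~\eqref{eq:Wtz-gaussian-representation}, for a Gaussian state $\rho=\rho(\omega,S)$ one has $\tr{\rho\,e^{2it\sigma(z,a)}}=e^{it(\omega,z)-\frac12 t^2(z,Sz)}$ for every $t\in\R$; substituting this into the previous display produces precisely the asserted identity. For the finiteness claim I would then observe that the right-hand side is (up to the factor $(-i)^n$) the $n$-th derivative at the origin of $t\mapsto e^{it(\omega,z)-\frac12 t^2(z,Sz)}$, the exponential of a quadratic polynomial in $t$, hence an entire function of $t$ whatever the fixed values $(\omega,z)$ and $(z,Sz)$ are; thus all of its derivatives at $t=0$ exist and are finite, and every annihilation moment of $\rho(\omega,S)$ is finite. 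Note also that this same smoothness is what, through the proof of Theorem~\ref{g-finite-moments}, guarantees that the limit in Definition~\ref{def:annihilation-moments} defining $\big\langle(2\sigma(z,a))^n\big\rangle_\rho$ exists at all.

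I do not anticipate a genuine obstacle: the only point needing a line of justification is the selfadjointness of $2\sigma(z,a)$ — equivalently, that it generates the unitary group $W_{tz}$ — and this is already available from the discussion in Section~\ref{annihilation-moments}. If desired, one could go further and make the derivatives explicit by differentiating $e^{i\beta t-\frac12\gamma t^2}$ at $t=0$ and recording the resulting Hermite-type recursion for $\big\langle(2\sigma(z,a))^n\big\rangle_\rho$, but that computation is not needed for the statement.
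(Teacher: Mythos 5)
Your argument is correct and is exactly the paper's route: the corollary is obtained by specializing Theorem~\ref{g-finite-moments} to the selfadjoint generator $A=2\sigma(z,a)$ of the unitary group $t\mapsto W_{tz}$ and inserting the Gaussian expression \eqref{eq:Wtz-gaussian-representation} with $\alpha=t$, finiteness following since $t\mapsto e^{it(\omega,z)-\frac12 t^2(z,Sz)}$ is entire. Your extra remarks on selfadjointness and on existence of the limit in Definition~\ref{def:annihilation-moments} are consistent with the discussion preceding the corollary and add nothing that conflicts with the paper.
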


The characteristic function of the annihilation observable $\sigma(z,a)$ in a state $\rho$ is
\begin{align*}
\varphi_z[\rho](t)\ceq\mathcal F[\rho](tz)=\frac{1}{\sqrt{\pi}}\tr{\rho W_{tz}}\,, \quad t\in\mathbb R\,.\end{align*}
In particular,  $\varphi_z[\rho](t)=(\pi)^{-1/2}e^{-it(\omega,z)-\frac{1}{2}t^2 (z,Sz)}$, when $\rho=\rho(\omega,S)$ is Gaussian.
\begin{corollary}\label{necessity} If $\rho=\rho(\omega,S)$ is a Gaussian state, then the sequence of annihilation moments $\lrp{\big\langle \sigma(z, a)^n\big\rangle_\rho}_{n\in\mathbb N}$ satisfies the following recurrence relation  
\begin{align}\label{g-moments}
\big\langle \big(2\sigma(z,a)-(w,z)I\big)^n\big\rangle_\rho =\begin{cases}
0\,,&\mbox{for $n$ odd}\\
(z,Sz)^\frac{n}{2}(n-1)!!\,,&\mbox{for $n$ even}
\end{cases}
\end{align}
\end{corollary}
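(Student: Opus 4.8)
The plan is to apply Theorem~\ref{g-finite-moments} to the centred observable $B\ceq 2\sigma(z,a)-(w,z)I$ and then to evaluate the $n$-th derivative of a one-dimensional Gaussian function at the origin. Since $(w,z)=\re\ip{w}{z}$ is a real number, $B$ is selfadjoint: by \eqref{eq:symplectic-exponential-representation} and \eqref{Weyl-CCR} the family $t\mapsto W_{tz}=e^{-2it\sigma(z,a)}$ is a strongly continuous unitary group, so $2\sigma(z,a)$ is selfadjoint, and subtracting the real scalar $(w,z)$ keeps it selfadjoint while multiplying the group merely by the phase $e^{-it(w,z)}$. Thus Definition~\ref{def:annihilation-moments} is meaningful for $B$, and Theorem~\ref{g-finite-moments} yields
\[
\langle B^n\rangle_\rho=(-i)^n\,\frac{d^n}{dt^n}\tr{\rho\, e^{itB}}\rE{t=0}\,,\qquad n\in\mathbb N\,,
\]
as soon as the right-hand side exists.

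Next I would compute $\tr{\rho\, e^{itB}}$. Since $(w,z)I$ commutes with $2\sigma(z,a)$, we have $e^{itB}=e^{-it(w,z)}e^{2it\sigma(z,a)}$, and \eqref{eq:Wtz-gaussian-representation} with $\alpha=t$ (i.e.\ $\tr{\rho\, e^{2it\sigma(z,a)}}=\sqrt{\pi}\,\qF{\rho}(-tz)=e^{it(w,z)-\frac12 t^2(z,Sz)}$) gives
\[
\tr{\rho\, e^{itB}}=e^{-it(w,z)}\,e^{it(w,z)-\frac12 t^2(z,Sz)}=e^{-\frac12 t^2(z,Sz)}\,.
\]
As this is an entire function of $t$, all its derivatives at $0$ exist, so each $\langle B^n\rangle_\rho$ is finite and equals $(-i)^n\frac{d^n}{dt^n}e^{-\frac12 t^2(z,Sz)}\rE{t=0}$.

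Finally I would read off those derivatives. With $\lambda\ceq(z,Sz)\geq0$ and $e^{-\frac12\lambda t^2}=\sum_{k\geq0}\frac{(-\lambda/2)^k}{k!}\,t^{2k}$, the $n$-th derivative at $0$ is $0$ for odd $n$ and equals $\frac{n!}{(n/2)!}(-\lambda/2)^{n/2}=(-1)^{n/2}(n-1)!!\,\lambda^{n/2}$ for even $n$, using $(2k)!/(2^k k!)=(2k-1)!!$. Multiplying by $(-i)^n$, the odd moments vanish and the even ones equal $(-1)^{n/2}(-1)^{n/2}(n-1)!!\,(z,Sz)^{n/2}=(n-1)!!\,(z,Sz)^{n/2}$, which is precisely the right-hand side of \eqref{g-moments} (and, equivalently, the recursion $\langle B^n\rangle_\rho=(n-1)(z,Sz)\langle B^{n-2}\rangle_\rho$ with $\langle B^0\rangle_\rho=1$, $\langle B^1\rangle_\rho=0$). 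The argument is a direct computation, and its only slightly delicate point is the applicability of Theorem~\ref{g-finite-moments} to the shifted observable $B$, which is immediate since a real scalar shift preserves selfadjointness and only introduces a scalar phase in the associated unitary group and its Yosida approximations.
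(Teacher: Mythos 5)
Your proof is correct and follows essentially the same route as the paper: you invoke \eqref{eq:Wtz-gaussian-representation} to reduce $\tr{\rho\,e^{it(2\sigma(z,a)-(w,z)I)}}$ to $e^{-\frac12 t^2(z,Sz)}$ and then apply Theorem~\ref{g-finite-moments} to the shifted observable, exactly as in the paper's argument. The extra details you supply (selfadjointness of the scalar-shifted generator and the explicit double-factorial computation of the derivatives) are fine but not a different method.
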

\begin{proof} It is straightforward from \eqref{eq:Wtz-gaussian-representation}  that
\begin{gather*}
e^{-\frac12 t^2(z,Sz)}=\tr{\rho e^{it(2\sigma(z,a)-(w,z)I)}}\,.
\end{gather*} In this fashion, one obtains by Theorem \ref{g-finite-moments} that
\begin{align*}
\big\langle \big(2\sigma(z,a)-(w,z)I\big)^n\big\rangle_\rho&= (-i)^{n}\frac{d^n}{dt^n}\tr{\rho e^{it(2\sigma(z,a)-(w,z)I)}}\rE{t=0}\\&=(-i)^{n}\frac{d^n}{dt^n}{e^{-\frac12 t^2(z,Sz)}}\rE{t=0}\,,
\end{align*}
whence one arrives at \eqref{g-moments}.
\end{proof}
\begin{remark}
Since $\sigma(z,a)$ and $(w,z)I$ commute, the exponential  $e^{it(2\sigma(z,a)-(w,z)I)}$ can be disentangled to obtain after taking derivatives that
\begin{gather*}
\big\langle \big(2\sigma(z,a)-(w,z)I\big)^n\big\rangle_\rho= \sum_{k=0}^{n}(-1)^k \begin{pmatrix} n \\ k\end{pmatrix}\big\langle \big(2\sigma(z,a)\big)^{n-k}\big\rangle_\rho (w,z)^k\,.
\end{gather*}
Given the first two moments $m_1(z), m_2(z)$, the implicit recurrence relation in \eqref{g-moments} is explicitly written for the moments $m_n(z)=\langle(2\sigma(z,a))^n\rangle_\rho$, $n\geq 3$,  as follows 
\begin{align*}
m_n(z)=\,\unit_{2{\mathbb N}}(n)(m_2(z)-m_1^2(z))^{\frac{n}{2}}(n-1)!!+\sum_{k=1}^n(-1)^{k+1}\begin{pmatrix} n \\ k\end{pmatrix}m_{n-k}(z)m_1^k(z)\,, 
\end{align*}
where $\unit_{2{\mathbb N}}$ denotes the indicator function of the set of even natural numbers.
\end{remark}
We will show in the next section that the recurrence relation \eqref{g-moments} allows us to characterize the Gaussian states.

\section{A characterization of Gaussian states}\label{characterization-gaussian}
Under suitable assumptions, in this section we prove a converse of Corollary \ref{necessity}. We start with a rigorous definition of the expected value for unbounded observables and a characterization of the covariance matrix $S$ of a quantum Gaussian state.

For any state $\rho$ and any bounded positive selfadjoint operator $A$, we have that 
\begin{gather*}
\tr{\rho A} = \tr{\rho^{\frac{1}{2}} A \rho^{\frac{1}{2}}} =\ip{A^{\frac{1}{2}}\rho^{\frac{1}{2}}}{ A^{\frac{1}{2}} \rho^{\frac{1}{2}}}_2 =\ip{ \rho^{\frac{1}{2}}}{A \rho^{\frac{1}{2}}}_2 \,.
\end{gather*}
In this fashion, one can represent states by unit vectors $\rho^{\frac{1}{2}}$ in the Hilbert space $L_{2}(\initsp)$. Moreover, one can carry the observables and semigroups defined on $\initsp$ to corresponding observables and semigroups acting on the space $L_2(\initsp)$. Indeed, following \cite{MR363279}, we consider the more general multiplication operator $M_B$ defined by 
\begin{gather*}
M_B \rho = B\rho\,,\quad B\in{\mathcal B}(\initsp)
\end{gather*}
which is  bounded  on $L_{2}(\initsp)$, with norm $\no{M_B}=\no{B}$. Regard the isomorphism ${\mathcal V}$ from $L_2(\initsp)$ onto $\initsp\otimes \initsp$ defined by 
\begin{gather*}
\mathcal V\vk u\vb v=u \otimes \theta v\,,
\end{gather*}
which is extended by linearity and continuity to the whole space $\initsp$, where $\theta$ is any anti-unitary operator on $\initsp$ such that $\theta^2 =\unit$. One directly computes that
\begin{gather*}
\mathcal VM_B\mathcal V^{-1}u \otimes v= \mathcal V\vk{B u}\vb{\theta v} = \lrp{B\otimes \unit}u\otimes v\,.
\end{gather*} 
Thereby, it follows by linearity and density that 
\begin{gather*}
\mathcal VM_B\mathcal V^{-1}= B\otimes \unit\,.
\end{gather*}
Hence, we can identify $L_{2}(\initsp)$ with $\initsp \otimes\initsp$ and consider the operator $B\mapsto B\otimes \unit$ instead of $M_B$. 

If $U(t)$ is a strongly continuous unitary group on $\initsp$ and $A$ is the corresponding unbounded selfadjoint generator, with associated spectral measure $(E_{\lambda} )_{\lambda\in{\mathbb R}}$. Then 
\begin{gather*}
U(t)= \int e^{it\lambda} d E_{\lambda}\,.
\end{gather*}
The unitary group $U(t)\otimes \unit$ and the spectral measure $(E_{\lambda}\otimes \unit)$ have the corresponding properties, in particular, 
\begin{gather*}
U(t)\otimes \unit= \int e^{it\lambda} d (E_{\lambda}\otimes \unit) \,.
\end{gather*}

Let $\lrp{{\mathbb U}(t)}_{t\in{\mathbb R}}$ and $\lrp{{\mathbb E}_{\lambda}}_{\lambda\in{\mathbb R}}$ be the corresponding unitary group and spectral family on $L_2(\initsp)$, such that
\begin{gather*}
{\mathcal V}{\mathbb U}_{t}{\mathcal V}^{-1} = U(t)\otimes \unit\,,\quad\forall\, t\in{\mathbb R}\,.
\end{gather*}
Consider ${\mathbb A}$ the corresponding selfadjoint generator and the representations
\begin{gather*}
{\mathbb A}=\int \lambda d{\mathbb E}_{\lambda}\quad;\quad {\mathbb U}_{t}=\int e^{it\lambda} d{\mathbb E}_\lambda\,,
\end{gather*} 
whence if $A$ is positive, then so is ${\mathbb A}$.
From \cite[Lem.\,2]{MR363279}, the explicit action of ${\mathbb A}$ is  
\begin{gather}\label{explicit-action}
{\mathbb A}\eta = \Lambda \eta, \qquad\textrm{for all } \, 
\eta \in \dom({\mathbb A})=\{\eta\in L_{2}(\initsp) \, : \, \Lambda \eta\in L_{2}(\initsp)\}\,.
\end{gather}
\begin{remark}\label{rm:Yosida-positive-selfadjoint}
For a positive selfadjoint operator $A$ and $\epsilon>0$, one has that $(I+\epsilon A)^{-1}$ is a positive contraction in $\mathcal B(\initsp)$ (cf. \cite[Sec.\,3]{MR0226433}). Thereby, $A$ is the infinitesimal generator of a strongly continuous semigroup of contractions $e^{-tA}$ and satisfies \eqref{eq:Yosida-resolvent-onH}-\eqref{eq:Yosida-resolvent-dense}. So, $A_\epsilon= A(I + \epsilon A)^{-1}$ is also positive and due to the resolvent commutes with $A$, the operators $(I+\epsilon A)^{-1}, A_\zeta, A_\eta$ pairwise commute, for $\epsilon,\zeta, \eta>0$.
\end{remark}

\begin{lemma}\label{lem:increase-positive-Yosida}
For $\epsilon>0$, if $A$ is a positive selfadjoint operator, then $(I+\epsilon A)^{-1}$ and $A_\epsilon$ increase respectively to $I$ and $A$, strongly as $\epsilon\to 0$.
\end{lemma}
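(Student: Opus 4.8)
The plan is to reduce everything to the spectral theorem for the positive selfadjoint operator $A$. Writing $A=\int_{[0,\infty)}\lambda\,dE_\lambda$ with $(E_\lambda)_{\lambda\in\mathbb R}$ its spectral family, the bounded Borel functional calculus gives
\begin{align*}
(I+\epsilon A)^{-1}=\int_{[0,\infty)}\frac{1}{1+\epsilon\lambda}\,dE_\lambda\,,\qquad A_\epsilon=A(I+\epsilon A)^{-1}=\int_{[0,\infty)}\frac{\lambda}{1+\epsilon\lambda}\,dE_\lambda\,.
\end{align*}
For each fixed $\lambda\geq0$ both $\epsilon\mapsto(1+\epsilon\lambda)^{-1}$ and $\epsilon\mapsto\lambda(1+\epsilon\lambda)^{-1}$ are nondecreasing as $\epsilon\downarrow0$ and bounded above by $1$ and by $\lambda$, respectively. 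Since $f\leq g$ on the spectrum of $A$ (which is contained in $[0,\infty)$) implies $f(A)\leq g(A)$ in the operator order, it follows immediately that $0<\epsilon'<\epsilon$ yields $(I+\epsilon A)^{-1}\leq(I+\epsilon' A)^{-1}\leq I$ and $A_\epsilon\leq A_{\epsilon'}$; this is the monotonicity half of the assertion.

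For the convergence I would fix $u\in\initsp$ and let $\mu_u(\cdot)\ceq\ip{u}{E_{(\cdot)}u}$ be the associated finite Borel measure on $[0,\infty)$ of total mass $\no u^2$. Then $\no{(I+\epsilon A)^{-1}u-u}^2=\int\bigl(\tfrac{\epsilon\lambda}{1+\epsilon\lambda}\bigr)^2 d\mu_u(\lambda)\to0$ as $\epsilon\to0$ by dominated convergence (the integrand is dominated by $1\in L^1(\mu_u)$ and tends pointwise to $0$), so together with the monotonicity $(I+\epsilon A)^{-1}$ increases strongly to $I$. If moreover $u\in\dom A$, then $\lambda\in L^2(\mu_u)$, and since $\abs{\lambda-\lambda(1+\epsilon\lambda)^{-1}}=\epsilon\lambda^2(1+\epsilon\lambda)^{-1}\leq\lambda$, a further application of dominated convergence gives $\no{A_\epsilon u-Au}\to0$; with the monotonicity this is the statement that $A_\epsilon$ increases strongly to $A$ on $\dom A$. (Equivalently, by the monotone convergence theorem, $\ip{u}{A_\epsilon u}$ increases to $\ip{u}{Au}$ for $u\in\dom(A^{1/2})$ and to $+\infty$ otherwise, which is the correct reading of ``$A_\epsilon$ increases to the unbounded operator $A$''.)

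Since the plain strong convergence is already recorded in \eqref{eq:Yosida-resolvent-onH}--\eqref{eq:Yosida-resolvent-dense}, the genuinely new content here is only the order-monotonicity, which drops out of the functional calculus with no real difficulty. The one point that needs a little care is expository rather than technical: the net $(A_\epsilon)_\epsilon$ is \emph{not} uniformly bounded, so one cannot invoke the Vigier-type convergence theorem for bounded monotone nets, and ``increases to $A$'' must be understood in the quadratic-form sense (supremum of the forms $\ip{u}{A_\epsilon u}$) together with norm convergence on $\dom A$.
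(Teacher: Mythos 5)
Your argument is correct, but it runs along a different track from the paper's. You obtain the order-monotonicity from the spectral theorem: writing $(I+\epsilon A)^{-1}$ and $A_\epsilon$ as spectral integrals of the scalar functions $(1+\epsilon\lambda)^{-1}$ and $\lambda(1+\epsilon\lambda)^{-1}$, which are pointwise monotone in $\epsilon$, and then you re-derive the strong limits by dominated convergence against the measures $\mu_u$. The paper instead proves monotonicity purely algebraically via the second resolvent identity: for $\eta>\epsilon>0$ it writes $(I+\epsilon A)^{-1}-(I+\eta A)^{-1}=(\eta-\epsilon)A_\epsilon(I+\eta A)^{-1}\geq0$ and likewise $A_\epsilon-A_\eta=(\eta-\epsilon)A_\eta A_\epsilon\geq0$, the positivity coming from the fact (Remark~\ref{rm:Yosida-positive-selfadjoint}) that the resolvents and the Yosida approximants are commuting positive contractions/operators; the strong convergence is then simply quoted from \eqref{eq:Yosida-resolvent-onH}--\eqref{eq:Yosida-resolvent-dense}, exactly as you note one may do. The paper's route is shorter and stays at the level of resolvent algebra, needing no functional calculus; your route is self-contained (it reproves the convergence rather than citing it) and has the added merit of making explicit the correct reading of ``$A_\epsilon$ increases to $A$'' in the quadratic-form sense, namely that $\ip{u}{A_\epsilon u}$ increases to $\ip{u}{Au}$ on $\dom(A^{1/2})$ and diverges otherwise, a point the paper leaves implicit. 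Your caveat that the net $(A_\epsilon)_\epsilon$ is not uniformly bounded, so no bounded-monotone-net convergence theorem applies, is well taken and does not affect either proof.
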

\begin{proof}
If $\eta>\epsilon>0$, then the second resolvent identity implies that 
\begin{align*}
(I+\epsilon A)^{-1}-(I+\eta A)^{-1}&=(I+\epsilon A)^{-1}(\eta A-\epsilon A)(I+\eta A)^{-1}\\&=(\eta-\epsilon)A_\epsilon(I+\eta A)^{-1}\geq0\,.
\end{align*}
This also involves that  $A_\epsilon-A_\eta=(\eta-\epsilon)A_\eta A_\epsilon\geq0$. Hence by Remark~\ref{rm:Yosida-positive-selfadjoint}, the assertion follows. 
\end{proof}

The Spectral Theorem allows one to decompose any normal operator as a linear combination of two selfadjoint operators (see for instance \cite[Ch.\,3]{MR919948}). Indeed, 
\begin{itemize}
\item If $\Lambda = \int \lambda d E_{\lambda}$ is any normal operator, then by setting 
\begin{gather*}
\re\Lambda\ceq\int (\re\lambda) dE_{\lambda }\quad  \mbox{;}\quad \im\Lambda\ceq\int (\im\lambda) d E_{\lambda}\,,
\end{gather*}
we obtain two selfadjoint operators having commuting spectral projections such that  $\Lambda= \re\Lambda + i \im\Lambda$. 
\item If moreover $\Lambda = \int \lambda d E_{\lambda}$ is selfadjoint, then by setting 
\begin{gather*}
\Lambda_{+}\ceq\int \lambda\unit_{[0,\infty)}(\lambda) dE_{\lambda }\quad  \mbox{;}\quad \Lambda_{-}\ceq-\int \lambda \unit_{(-\infty, 0)}(\lambda) d E_{\lambda}\,,
\end{gather*}
we get two positive operators such that $\Lambda= \Lambda_{+} - \Lambda_{-}$.
\end{itemize}

\begin{definition}\label{def:A-traceable-rho}
A state $\rho$ is \emph{traceable} for a selfadjoint operator $A$ ($A$-traceable for short)  
if $\lim_{\epsilon\to 0}\tr{\rho[(A_+)_\epsilon-(A_-)_\epsilon]}$ exists. In such a case we write  
\begin{gather}\label{eq:def-rhoA-traceable}
 \tr{\rho A}\ceq\lim_{\epsilon\to 0}\tr{\rho[(A_+)_\epsilon-(A_-)_\epsilon]}\,.
\end{gather}
\end{definition}
\begin{remark}\label{rm:traceable-parts-A}  For a state $\rho$ and a selfadjoint operator $A=A_+-A_-$, if $\rho$ is 
traceable for both $A_+$ and $A_-$, then so is  for $A$ and 
\begin{gather*}
\tr{\rho A}=\tr{\rho A_+}-\tr{\rho A_-}\,.
\end{gather*}
However, the converse is not true. Actually, for an unbounded, positive, selfadjoint operator $B$, for which $\rho$ is not traceable, one has that $A=B-B$ is an unbounded selfadjoint operator. Thus, since $B_\epsilon\in\mathcal B(\initsp)$, 
\begin{gather*}
\tr{\rho A}=\lim_{\epsilon\to0}\tr{\rho(B_\epsilon-B_\epsilon)}=0\,,
\end{gather*}
i.e, $\rho$ is $A$-traceable. 
\end{remark}

Dentition~\ref{def:A-traceable-rho} can be extended for any normal operator $A$ as follows  
\begin{align*}
\tr{\rho A}\ceq\lim_{\epsilon\to 0}\tr{\rho\big([(\re A)_+)]_\epsilon-[(\re A)_-)]_\epsilon
+i[(\im A)_+)]_\epsilon-i[(\im A)_-)]_\epsilon\big)}\,,
\end{align*}
whenever the limit exists.
 
 \begin{remark}
 The notation \eqref{eq:def-rhoA-traceable} extends the usual one for bounded selfadjoint operators. Indeed, if $A$ is a bounded selfadjoint operator then so are $A_+$ and $A_-$. Thus, for any sate $\rho$, one has by Remark~\ref{rm:Yosida-positive-selfadjoint} that 
\begin{align*}
\lim_{\epsilon\to 0}\tr{\rho(A_+)_\epsilon}=\lim_{\epsilon\to 0}\ip{ \rho^{\frac{1}{2}}}{(A_+)_\epsilon \rho^{\frac{1}{2}}}_2=\ip{ \rho^{\frac{1}{2}}}{A_+ \rho^{\frac{1}{2}}}_2=\tr{\rho A_+}\,.
\end{align*}
The same holds for $A_-$ and, hence, 
\begin{gather*}
\lim_{\epsilon\to 0}\tr{\rho[(A_+)_\epsilon-(A_-)_\epsilon]}=\tr{\rho A}\,.
\end{gather*}
In this fashion, any state is traceable for any bounded selfadjoint operator.
 \end{remark}

The following result is a slightly different version of \cite[Lemma\,2.1]{MR1706597}. We will freely use spectral functional calculus for unbounded selfadjoint operators (see \cite[Sec.\,5.3]{MR2953553} for more details).
\begin{lemma}\label{lem-positive-traceable-rho}
A state $\rho$ is traceable for a positive selfadjoint operator $A$ if and only if $A^{\frac12}\rho^{\frac12}\in L_2(\initsp)$. In such a case 
\begin{gather}\label{eq:trace-rho-postive}
\tr{\rho A}=\int\lambda d\ip{\rho^{\frac12}}{\mathbb E\rho^{\frac12}}_2=\ip{A^{\frac12}\rho^{\frac12}}{A^{\frac12}\rho^{\frac12}}_2\,,
\end{gather}
where $\mathbb E$ is the spectral measure of \eqref{explicit-action}. If the state has  the spectral decomposition $\rho = \sum_{k\in\mathbb N} \rho_{k} \vk{u_k}\vb{u_k}$, then
\begin{gather}\label{eq:trace-rho-decomposition}
\tr{\rho A}=\sum_{k\in\mathbb N} \rho_k\no{A^{\frac{1}{2}}u_k}^2\,.
\end{gather}
\end{lemma}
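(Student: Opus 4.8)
The plan is to prove the equivalence by passing through the Hilbert space $L_2(\initsp)$ and the selfadjoint operator $\mathbb A$ from \eqref{explicit-action}, using the Spectral Theorem together with Lemma~\ref{lem:increase-positive-Yosida} to control the Yosida approximations. First I would observe that, since $A$ is positive selfadjoint, so is $\mathbb A$ on $L_2(\initsp)$, and by \eqref{explicit-action} the operator $\mathbb A$ acts as left multiplication by $A$ on its natural domain; in particular $\mathbb A^{1/2}\rho^{1/2}$ makes sense precisely when $A^{1/2}\rho^{1/2}\in L_2(\initsp)$. The Yosida approximant $(A_\epsilon)$ corresponds under $\mathcal V$ to $A_\epsilon\otimes\unit$, which in turn is $\mathbb A_\epsilon=\mathbb A(I+\epsilon\mathbb A)^{-1}$, so that
\begin{gather*}
\tr{\rho A_\epsilon}=\ip{\rho^{1/2}}{\mathbb A_\epsilon\rho^{1/2}}_2=\int \frac{\lambda}{1+\epsilon\lambda}\,d\ip{\rho^{1/2}}{\mathbb E_\lambda\rho^{1/2}}_2\,,
\end{gather*}
where the last equality is the Spectral Theorem applied to the bounded Borel function $\lambda\mapsto\lambda/(1+\epsilon\lambda)$ on $[0,\infty)$.

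Next I would invoke monotone convergence. By Lemma~\ref{lem:increase-positive-Yosida} the bounded positive operators $\mathbb A_\epsilon$ increase to $\mathbb A$ strongly as $\epsilon\to 0$; equivalently the integrands $\lambda/(1+\epsilon\lambda)$ increase pointwise to $\lambda$ on $[0,\infty)$. Since $d\ip{\rho^{1/2}}{\mathbb E_\lambda\rho^{1/2}}_2$ is a finite positive measure (of total mass $\no{\rho^{1/2}}_2^2=\tr\rho=1$), the Monotone Convergence Theorem gives
\begin{gather*}
\lim_{\epsilon\to 0}\tr{\rho A_\epsilon}=\int_{[0,\infty)}\lambda\, d\ip{\rho^{1/2}}{\mathbb E_\lambda\rho^{1/2}}_2\in[0,\infty]\,.
\end{gather*}
Thus the limit defining $\tr{\rho A}$ exists (as a finite number) if and only if this integral is finite, which by the Spectral Theorem is exactly the condition $\rho^{1/2}\in\dom(\mathbb A^{1/2})$, i.e. $A^{1/2}\rho^{1/2}\in L_2(\initsp)$; and in that case $\int\lambda\,d\ip{\rho^{1/2}}{\mathbb E_\lambda\rho^{1/2}}_2=\no{\mathbb A^{1/2}\rho^{1/2}}_2^2=\ip{A^{1/2}\rho^{1/2}}{A^{1/2}\rho^{1/2}}_2$, using that $\mathbb A^{1/2}$ also acts by multiplication by $A^{1/2}$. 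This establishes both the equivalence and formula \eqref{eq:trace-rho-postive}.

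For \eqref{eq:trace-rho-decomposition}, I would write $\rho=\sum_k\rho_k\vk{u_k}\vb{u_k}$ with $\rho_k\geq 0$, $\sum_k\rho_k=1$, and note $\rho^{1/2}=\sum_k\sqrt{\rho_k}\vk{u_k}\vb{u_k}$. Applying $A_\epsilon\otimes\unit$ and taking the $L_2(\initsp)$ inner product, the cross terms vanish by orthogonality of the $\{u_k\}$, leaving $\tr{\rho A_\epsilon}=\sum_k\rho_k\ip{u_k}{A_\epsilon u_k}$; each term increases to $\rho_k\no{A^{1/2}u_k}^2$ by Lemma~\ref{lem:increase-positive-Yosida} and the bound $\ip{u_k}{A_\epsilon u_k}=\no{A_\epsilon^{1/2}u_k}^2$, so a second application of monotone convergence (now for sums of monotone sequences, i.e. Tonelli) yields the stated identity. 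The main obstacle I anticipate is purely a matter of bookkeeping: justifying the interchange of the limit $\epsilon\to 0$ with the trace/integral, which is where the positivity of $A$ is essential (it makes everything monotone, so no domination hypothesis is needed), and carefully identifying the abstract generator $\mathbb A$ with multiplication by $A$ so that the spectral measure $\mathbb E$ in \eqref{explicit-action} is the correct object to integrate against — both points are handled by the preparatory material already in the section, so the argument should be short.
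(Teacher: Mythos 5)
Your proposal is correct and follows essentially the same route as the paper: Yosida approximants, the monotonicity of Lemma~\ref{lem:increase-positive-Yosida}, and the spectral measure $\mathbb E$ of \eqref{explicit-action} acting on $\rho^{\frac12}\in L_2(\initsp)$. The only (harmless) difference is organizational: you run a single monotone-convergence argument on the spectral integral, obtaining the identity $\lim_{\epsilon\to0}\tr{\rho A_\epsilon}=\int\lambda\, d\ip{\rho^{\frac12}}{\mathbb E\rho^{\frac12}}_2$ in $[0,\infty]$ and reading off both implications at once, whereas the paper proves the forward direction by summing $\ip{\rho^{\frac12}u_k}{A_\epsilon\rho^{\frac12}u_k}$ over an orthonormal basis and the converse by dominated convergence.
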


\begin{proof}
 If $\rho$ is $A$-traceable then from Lemma~\ref{lem:increase-positive-Yosida}, for a unit vector $h\in\initsp$,
\begin{gather*}
\ip{\rho^{\frac12}h}{A_\epsilon\rho^{\frac12}h}\leq\tr{\rho^{\frac12}A_{\epsilon}\rho^{\frac12}}=\tr{\rho A_\epsilon}<\infty 
\end{gather*} for all $\epsilon>0$.
Thereby, $\ip{\rho^{\frac12}h}{A_\epsilon\rho^{\frac12}h}$ increases as $\epsilon\to0$, to the finite value $\|A^{\frac12}\rho^{\frac12}h\|^2$ and one obtains that $\ran \rho^{\frac12}\subset\dom A^{\frac12}$. Besides, if $\{u_k\}_{k\in\mathbb N}$ is an orthonormal basis, by the monotone convergence theorem, 
\begin{align}\label{eq:trace-rho-postive-A}
\begin{split}
\no{A^{\frac12}\rho^{\frac12}}_2^2&=\sum_{k\in\mathbb N}\no{A^{\frac12}\rho^{\frac12}u_k}^2=\lim_{\epsilon\to0}\sum_{k\in\mathbb N}\ip{\rho^{\frac12}u_k}{A_\epsilon\rho^{\frac12}u_k}=\tr{\rho A}<\infty\,.
\end{split}
\end{align}
Hence, $A^{\frac12}\rho^{\frac12}\in L_2(\initsp)$. 
Conversely, if $A^{\frac12}\rho^{\frac12}\in L_2(\initsp)$ then $\rho^{\frac12}\in\dom \mathbb A^{\frac12}$, i.e., $\int\lambda d\ip{\rho^{\frac12}}{\mathbb E\rho^{\frac12}}_2<\infty$. Note that 
\begin{gather*}
\lambda(1+\epsilon \lambda)^{-1}\leq \lambda\,,\quad \epsilon,\lambda\geq0\,.
\end{gather*}
Thus, by Lebesgue's Theorem on Dominated Convergence, 
\begin{align}\label{eq:trace-rho-postive-E}
\begin{split}
\tr{\rho A}&=\lim_{\epsilon\to0}\ip{\rho^{\frac12}}{A(I-\epsilon A)^{-1}\rho^{\frac12}}_2\\&=\lim_{\epsilon\to0}\int\frac{\lambda}{1+\epsilon\lambda}d\ip{\rho^{\frac12}}{\mathbb E\rho^{\frac12}}_2=\int\lambda d\ip{\rho^{\frac12}}{\mathbb E\rho^{\frac12}}_2\,,
\end{split}
\end{align}
as required. Equalities \eqref{eq:trace-rho-postive} and \eqref{eq:trace-rho-decomposition} are straightforward from \eqref{eq:trace-rho-postive-A} and \eqref{eq:trace-rho-postive-E}.
\end{proof}
\begin{remark}\label{rmk:linearity-traceable}
For a state $\rho$ and $A=A^*$, it is clear from \eqref{eq:def-rhoA-traceable} that if $\rho$ is $A$-traceable and $\alpha\in\mathbb R$ then $\tr{\rho(\alpha A)}=\alpha\tr{\rho A}$. Besides, if $\rho$ is traceable for both $A_\pm$, then from Remark~\ref{rm:traceable-parts-A} and \eqref{eq:trace-rho-postive}, $\rho$ is $A$-traceable and 
\begin{align*}
\tr{\rho A}=\int\lambda_+ d\ip{\rho^{\frac12}}{\mathbb E\rho^{\frac12}}_2-\int\lambda_- d\ip{\rho^{\frac12}}{\mathbb E\rho^{\frac12}}_2=\int\lambda d\ip{\rho^{\frac12}}{\mathbb E\rho^{\frac12}}_2.
\end{align*}
Moreover, for $\alpha,\beta\in\mathbb R$ and $\mathbb E$-a.e. finite real-valued Borel functions $f,g$ such that $\rho$ is traceable for both $f(A)_\pm$ and $g(A)_\pm$, one has that  
\begin{align*}
\tr{\rho(\alpha f(A)+\beta g(A))}&=\int\lrp{\alpha f(\lambda)+\beta g(\lambda)} d\ip{\rho^{\frac12}}{\mathbb E\rho^{\frac12}}_2\\
&=\,\alpha\int f(\lambda)d\ip{\rho^{\frac12}}{\mathbb E\rho^{\frac12}}_2+\beta\int g(\lambda)d\ip{\rho^{\frac12}}{\mathbb E\rho^{\frac12}}_2\\&=\,\alpha\tr{\rho f(A)}+\beta\tr{\rho g(A)}\,.
\end{align*}
\end{remark}

\begin{theorem}\label{thm:An-traceable-derivates}
Let $\rho$ be a state and $A=A^*$. If $\rho$ is traceable for both $(A^n)_\pm$, then
\begin{gather}\label{eq:An-traceable-derivates}
\tr{\rho A^n}=i^{-n}\frac{d^n}{dt^n}\tr{\rho e^{itA}}\rE{t=0}\,,\quad n\geq 0\,.
\end{gather}
Hence, Theorem~\ref{g-finite-moments} implies $\tr{\rho A^n}=\langle A^n\rangle_\rho$.
\end{theorem}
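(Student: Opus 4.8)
The plan is to route everything through the classical distribution of $A$ in the state $\rho$. Let $\mathbb E$ be the spectral measure of the lift $\mathbb A$ appearing in \eqref{explicit-action}, and for a Borel set $B\subset\mathbb R$ put $\mu(B)\ceq\ip{\rho^{\frac12}}{\mathbb E(B)\rho^{\frac12}}_2$; since $\mathbb E$ takes values in orthogonal projections and $\no{\rho^{\frac12}}_2^2=\tr\rho=1$, this defines a Borel probability measure $\mu$ on $\mathbb R$. The operator $e^{itA}$ is bounded with $M_{e^{itA}}=\mathbb U_t=e^{it\mathbb A}$, so by cyclicity of the trace and the functional calculus of $\mathbb A$ one gets $\tr{\rho e^{itA}}=\tr{\rho^{\frac12}e^{itA}\rho^{\frac12}}=\ip{\rho^{\frac12}}{\mathbb U_t\rho^{\frac12}}_2=\int_{\mathbb R}e^{it\lambda}\,d\mu(\lambda)$; that is, $t\mapsto\tr{\rho e^{itA}}$ is exactly the characteristic function of $\mu$.

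Next I would cash in the hypothesis. The positive and negative parts of the selfadjoint operator $A^n$ are $(A^n)_\pm=g_\pm(A)$ with $g_+(\lambda)=(\lambda^n)_+$ and $g_-(\lambda)=(\lambda^n)_-$, and by \eqref{explicit-action} their lifts are $g_\pm(\mathbb A)$, whose spectral measures are the push-forwards $\mathbb E\circ g_\pm^{-1}$. Hence Lemma~\ref{lem-positive-traceable-rho} says that $\rho$ is traceable for $(A^n)_\pm$ if and only if $\int_{\mathbb R}(\lambda^n)_\pm\,d\mu(\lambda)<\infty$, in which case $\tr{\rho(A^n)_\pm}=\int_{\mathbb R}(\lambda^n)_\pm\,d\mu(\lambda)$. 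Adding the two finite integrals gives $\int_{\mathbb R}\abs{\lambda}^n\,d\mu(\lambda)<\infty$, while subtracting them, together with Remark~\ref{rm:traceable-parts-A} and Remark~\ref{rmk:linearity-traceable}, gives $\tr{\rho A^n}=\int_{\mathbb R}(\lambda^n)_+\,d\mu-\int_{\mathbb R}(\lambda^n)_-\,d\mu=\int_{\mathbb R}\lambda^n\,d\mu(\lambda)$.

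I would then invoke the elementary fact that a finite Borel measure on $\mathbb R$ with finite $n$-th absolute moment has a $C^n$ characteristic function satisfying $\frac{d^k}{dt^k}\int e^{it\lambda}\,d\mu(\lambda)=\int(i\lambda)^k e^{it\lambda}\,d\mu(\lambda)$ for $0\le k\le n$. This is proved by induction on $k$: the difference quotient of $t\mapsto\int(i\lambda)^k e^{it\lambda}\,d\mu$ equals $\int(i\lambda)^k e^{it\lambda}(e^{ih\lambda}-1)h^{-1}\,d\mu$, whose integrand converges pointwise to $(i\lambda)^{k+1}e^{it\lambda}$ and is dominated by $\abs{\lambda}^{k+1}$ (because $\abs{(e^{ih\lambda}-1)h^{-1}}\le\abs{\lambda}$), which is $\mu$-integrable for $k+1\le n$ since $\int\abs{\lambda}^j\,d\mu<\infty$ for all $j\le n$ by H\"older's inequality; Lebesgue's dominated convergence theorem then yields the derivative and its continuity. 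Evaluating at $t=0$ gives $\frac{d^n}{dt^n}\tr{\rho e^{itA}}\rE{t=0}=i^{n}\int_{\mathbb R}\lambda^n\,d\mu(\lambda)=i^n\tr{\rho A^n}$, which is exactly \eqref{eq:An-traceable-derivates}. Since the $n$-th derivative at $0$ has now been shown to exist, Theorem~\ref{g-finite-moments} applies and yields $\langle A^n\rangle_\rho=(-i)^n\,i^n\tr{\rho A^n}=\tr{\rho A^n}$, completing the proof.

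I expect the part needing the most care to be the upgrade from finiteness of the $n$-th moment of $\mu$ to genuine $n$-fold differentiability of its characteristic function (as opposed to the mere symmetric-difference statement underlying Theorem~\ref{g-finite-moments}), together with the bookkeeping that identifies the extended trace $\tr{\rho A^n}$ of Definition~\ref{def:A-traceable-rho} with $\int_{\mathbb R}\lambda^n\,d\mu$ via the functional calculus of the lift $\mathbb A$. Both points are standard, but they must be written out precisely, since $A^n$ is in general unbounded and $\tr{\rho A^n}$ is defined through its positive and negative parts.
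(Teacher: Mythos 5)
Your proof is correct, and its backbone is the same as the paper's: both arguments pass to the spectral measure $\mathbb E$ of the lift $\mathbb A$ from \eqref{explicit-action}, so that $t\mapsto\tr{\rho e^{itA}}=\int e^{it\lambda}d\mu(\lambda)$ with $\mu(\cdot)=\ip{\rho^{\frac12}}{\mathbb E(\cdot)\rho^{\frac12}}_2$, use Lemma~\ref{lem-positive-traceable-rho} and the traceability of $(A^n)_\pm$ to obtain $\int\abs{\lambda}^n d\mu<\infty$ and $\tr{\rho A^n}=\int\lambda^n d\mu$, and then differentiate under the integral by dominated convergence before invoking Theorem~\ref{g-finite-moments}. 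The genuine difference lies in how the derivative is produced. The paper does it in one shot, representing the $n$-th derivative at $0$ by the single combination $\tfrac{n!}{2}h^{-n}\lrp{e^{ih\lambda}-2+(-1)^n e^{-ih\lambda}}$ and dominating it by $2\abs{\lambda}^n$; you instead prove by induction on $k\le n$, using the first-order bound $\abs{(e^{ih\lambda}-1)/h}\le\abs{\lambda}$ and H\"older (Lyapunov) for the intermediate absolute moments of the probability measure $\mu$, that the characteristic function is of class $C^n$ with $\frac{d^k}{dt^k}\int e^{it\lambda}d\mu=\int(i\lambda)^k e^{it\lambda}d\mu$. Your route buys something real: it establishes the actual existence of the $n$-th derivative at $t=0$ (which is what the right-hand side of \eqref{eq:An-traceable-derivates} and the appeal to Theorem~\ref{g-finite-moments} require), whereas the paper's single higher-order difference quotient, as written, needs repair for $n\ge 3$ (for odd $n$ the numerator does not vanish as $h\to0$, and for even $n\ge4$ the claimed domination $\abs{e^{ih\lambda}-2+(-1)^n e^{-ih\lambda}}\le 2\abs{h\lambda}^n$ fails for small $h\lambda$; one would need the full $n$-th order central difference there). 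Your explicit identification of $(A^n)_\pm$ with $g_\pm$ of the lift via push-forward of $\mathbb E$ is also sound and simply makes precise a step the paper leaves implicit.
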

\begin{proof}
One has by virtue of Lemma~\ref{lem-positive-traceable-rho}  that 
\begin{align*}
\int\abs{\lambda^n} d\ip{\rho^{\frac12}}{\mathbb E\rho^{\frac12}}_2\leq\int(\lambda^n)_+ d\ip{\rho^{\frac12}}{\mathbb E\rho^{\frac12}}_2+\int(\lambda^n)_- d\ip{\rho^{\frac12}}{\mathbb E\rho^{\frac12}}_2<\infty\,.
\end{align*}
Now the estimate, 
\begin{gather*}
\abs{\frac{e^{i h \lambda} - 2 + (-1)^ne^{-i h \lambda}}{h^{n}}} \leq 2 \abs{\lambda}^n 
\end{gather*} and Lebesgue's Theorem on Dominated Convergence allows interchanging the limit with the integral in the expression 
\begin{align*}
\frac{d^n}{dt^n}\tr{\rho e^{itA}}\rE{t=0}&=\lim_{h\to0}\frac{n!}{2}\ip{\rho^{\frac12}}{ \lrp{\frac{e^{ihA}-2I+(-1)^n e^{-ihA}}{h^n}}\rho^{\frac12}}_2\\
&=\lim_{h\to0}\int\frac{n!}{2}\lrp{\frac{e^{i h \lambda} - 2 + (-1)^ne^{-i h \lambda}}{h^{n}}}d\ip{\rho^{\frac{1}{2}}}{{\mathbb E}_{\lambda} \rho^{\frac{1}{2}}}_2\\&=\int i^n\lambda^nd\ip{\rho^{\frac{1}{2}}}{{\mathbb E}_{\lambda} \rho^{\frac{1}{2}}}_2=i^n\tr{\rho A^n}\,,
\end{align*}
which completes the proof.
\end{proof}

\begin{remark} If a Gaussian state $\rho$ is traceable for both $\lrp{ \sigma(z, a)^n}_\pm$, then from Theorem~\ref{thm:An-traceable-derivates}, the $n$-th annihilation moment holds
\begin{gather*}
\big\langle(2 \sigma(z,a))^n\big\rangle_\rho =\tr{\rho(2\sigma(z, a))^n}\,,\quad n\geq0\,.
\end{gather*}
\end{remark}

\begin{corollary}\label{mean-covariance} 
Let  $\rho=\rho(w, S)$ be a Gaussian state, with  $w=\sqrt{2}(l-im)$, and consider 
$z=x + iy\in{\mathbb C}$.
\begin{enumerate}[(i)]
\item\label{it-Mom1} If $\rho$ is both $\sigma(z,a)_\pm$-traceable, then due to the recurrence formula \eqref{g-moments}, $\tr{\rho\sigma(z,a)}=(w,z)/2$.  In particular, one has for $z=\sqrt{2}$ (resp. $z=i\sqrt{2}$) that 
\begin{gather*}
\tr{\rho P}=l\quad\mbox{and}\quad \tr{\rho Q}=m\,.
\end{gather*}
\item\label{it-Mom2} If $\rho$ is $\sigma(z,a)^2$-traceable, then it follows that $\tr{\rho\sigma(z,a)^2}=\big((w,z)^2+(z,Sz)\big)/4$. Particularly,
\begin{enumerate}
\item from $z=\sqrt{2}$ that $(1, S1)=2\lrp{\tr{\rho P^2}-l^2}$,  
\item from $z= \sqrt{2}i$ that $(i, Si)=2\lrp{\tr{\rho Q^2}-m^2}$, 
\item from $z=\sqrt{2}(1+i)$ that $\lrp{1+i, S(1+i)}= 2\lrp{\tr{\rho (P-Q)^{2}} -(l-m)^{2}}$.
\end{enumerate}
Hence, 
\begin{gather*}
(1,Si)=\tr{\rho (P-Q)^2} -\tr{\rho P^2}-\tr{\rho Q^2}+2lm\,.
\end{gather*}
\end{enumerate}
\end{corollary}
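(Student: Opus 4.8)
The plan is to extract the first and second annihilation moments from the recurrence \eqref{g-moments}, rewrite them as honest traces with the help of the traceability hypotheses, and then specialise $z$ to $\sqrt{2}$, $i\sqrt{2}$ and $\sqrt{2}(1+i)$. Taking $n=1$ in \eqref{g-moments} gives $\big\langle 2\sigma(z,a)-(w,z)I\big\rangle_\rho=0$, and the binomial expansion in the Remark following Corollary~\ref{necessity} (with $\langle(2\sigma(z,a))^0\rangle_\rho=\tr{\rho}=1$) forces $\big\langle 2\sigma(z,a)\big\rangle_\rho=(w,z)$. Taking $n=2$ gives $\big\langle(2\sigma(z,a)-(w,z)I)^2\big\rangle_\rho=(z,Sz)$; expanding the same way and inserting the previous value yields $\big\langle(2\sigma(z,a))^2\big\rangle_\rho=(w,z)^2+(z,Sz)$.

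Next I would move from moments to traces. Under the hypothesis of (i), $\rho$ is traceable for both $(\sigma(z,a))_\pm$, so the Remark after Theorem~\ref{thm:An-traceable-derivates} with $n=1$, together with the linearity in Remark~\ref{rmk:linearity-traceable}, gives $\big\langle 2\sigma(z,a)\big\rangle_\rho=2\tr{\rho\,\sigma(z,a)}$, whence $\tr{\rho\,\sigma(z,a)}=(w,z)/2$. Under the hypothesis of (ii), positivity of $\sigma(z,a)^2$ gives $(\sigma(z,a)^2)_+=\sigma(z,a)^2$ and $(\sigma(z,a)^2)_-=0$, so being $\sigma(z,a)^2$-traceable is exactly being traceable for both $(\sigma(z,a)^2)_\pm$; the same Remark with $n=2$ and Remark~\ref{rmk:linearity-traceable} then give $\big\langle(2\sigma(z,a))^2\big\rangle_\rho=4\tr{\rho\,\sigma(z,a)^2}$, hence $\tr{\rho\,\sigma(z,a)^2}=\big((w,z)^2+(z,Sz)\big)/4$.

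Finally I would evaluate. From the disentangling identity established in Section~\ref{annihilation-moments}, $e^{-2i\sigma(z,a)}=e^{-i\sqrt{2}(xP-yQ)}$ for $z=x+iy$, so $2\sigma(z,a)$ and $\sqrt{2}(xP-yQ)$ generate the same unitary group and therefore coincide; hence $\sigma(\sqrt{2},a)=P$, $\sigma(i\sqrt{2},a)=-Q$ and $\sigma(\sqrt{2}(1+i),a)=P-Q$. Using $w=\sqrt{2}(l-im)$ one gets $(w,\sqrt{2})=2l$, $(w,i\sqrt{2})=-2m$, $(w,\sqrt{2}(1+i))=2(l-m)$, while Remark~\ref{rm:properties-wS} gives $(\sqrt{2},S\sqrt{2})=2(1,S1)$, $(i\sqrt{2},S(i\sqrt{2}))=2(i,Si)$ and $(\sqrt{2}(1+i),S\sqrt{2}(1+i))=2(1+i,S(1+i))$. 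Plugging these into the two formulae of the previous paragraph gives at once $\tr{\rho P}=l$, $\tr{\rho Q}=m$ and the identities (a)--(c); and expanding $(1+i,S(1+i))=(1,S1)+2(1,Si)+(i,Si)$ by Remark~\ref{rm:properties-wS}, solving for $(1,Si)$, inserting (a)--(c) and using $l^2+m^2-(l-m)^2=2lm$ produces the claimed formula for $(1,Si)$.

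Apart from these routine manipulations with the symplectic form and Remark~\ref{rm:properties-wS}, the one delicate point -- the step I expect to be the main obstacle -- is the passage in the second paragraph from the abstract moment $\langle\cdot\rangle_\rho$ of Definition~\ref{def:annihilation-moments}, built from Yosida approximations of the generator, to the trace $\tr{\rho\,\cdot}$ of Definition~\ref{def:A-traceable-rho}. This is exactly what the Remark after Theorem~\ref{thm:An-traceable-derivates} provides, so the real work is to verify that the traceability hypotheses imposed in (i) and (ii) are precisely the ones that Remark requires, which holds once one notices that $\sigma(z,a)^2$ coincides with its own positive part.
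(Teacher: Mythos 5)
Your proof is correct and follows essentially the same route as the paper: both rest on Theorem~\ref{thm:An-traceable-derivates} (traces of powers equal derivatives of the Gaussian quantum characteristic function, equivalently the recurrence \eqref{g-moments}), followed by the specializations $\sigma(\sqrt{2},a)=P$, $\sigma(i\sqrt{2},a)=-Q$, $\sigma(\sqrt{2}(1+i),a)=P-Q$ and the algebra of Remark~\ref{rm:properties-wS}; your use of the recurrence plus the binomial expansion for part (ii), where the paper instead differentiates $e^{\frac12 it(w,z)-\frac18 t^2(z,Sz)}$ directly, is an equivalent computation. Your observation that $\sigma(z,a)^2$ coincides with its positive part, so that $\sigma(z,a)^2$-traceability is exactly the hypothesis needed in the Remark after Theorem~\ref{thm:An-traceable-derivates}, correctly handles the only delicate point.
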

\begin{proof} To prove \eqref{it-Mom2}, from Theorem~\ref{thm:An-traceable-derivates} and \eqref{eq:Wtz-gaussian-representation}, replacing $\alpha=t/2$, one obtains that 
\begin{align*}
\tr{\rho\sigma(z,a)^2}=-\frac{d^2}{dt^2}{e^{\frac12it(w,z)-\frac18 t^2(z,Sz)}}\rE{t=0}=\frac14\lrp{(w,z)^2+(z,Sz)}\,,
\end{align*}
as required.
\end{proof}

\subsection{Covariance matrix: the role of the uncertainty principle}
\begin{definition}
We call a state $\rho$ \emph{amenable} if for $n=1,2$ and all $z\in\C$, $\rho$ is traceable for both $\lrp{\sigma(z,a)^n}_\pm$.
\end{definition}
The following definition is reminiscent of the analogous definitions in classical probability for random variables. 
\begin{definition}
The \emph{variance} $V_\rho(A)^2$ of an observable $A$ in the state $\rho$ is  
\begin{gather*}
V_{\rho}(A)^2 \ceq\tr{\rho A^2} - \tr{\rho A}^2\,,
\end{gather*}
whenever the right hand makes sense. 
\end{definition}
It is clear that if $\rho$ is amenable then $V_{\rho}(P)^2$ and $V_{\rho}(Q)^2$ are well-defined. 
\begin{lemma}
Let $\rho$ be an amenable Gaussian state. Then for $z\in\C$,
\begin{gather*}
\no{\Big(\sigma(z,a)-\tr{\rho \sigma(z,a)}I\Big) \rho^{\frac{1}{2}}}_{2}^{2} = V_{\rho}\big(\sigma(z,a)\big)^2\,.
\end{gather*}
In particular, for $z=\sqrt{2}$ (\textrm{resp.} $z=-i\sqrt{2}$),
\begin{align}\label{norm-variance}
\begin{split}
\no{\big(P-\tr{\rho P}I\big) \rho^{\frac{1}{2}}}_{2}^{2} = V_{\rho}(P)^2 \qquad
\big(\textrm{resp.} \quad \no{\big(Q-\tr{\rho Q}I\big) \rho^{\frac{1}{2}}}_{2}^{2} = V_{\rho}(Q)^2\big)\,.
\end{split}
\end{align}
\end{lemma}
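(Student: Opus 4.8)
The plan is to reduce the statement to one elementary expansion of a Hilbert--Schmidt norm in $L_2(\initsp)$, after reconnecting the symbols $\tr{\rho\,\sigma(z,a)}$ and $\tr{\rho\,\sigma(z,a)^2}$ (which a priori are only the defined quantities of Definition~\ref{def:A-traceable-rho}) with genuine inner products. Fix $z\in\C$, write $A\ceq\sigma(z,a)=A^*$ for the selfadjoint observable (equal, up to the factor $\tfrac12$, to the generator of the unitary group $(W_{tz})_{t\in\R}$), and set $\mu\ceq\tr{\rho A}$, which is a well-defined real number since $\rho$ is amenable with $n=1$. Because $\rho$ is amenable with $n=2$, the positive selfadjoint operator $A^2=(A^2)_+$ makes $\rho$ traceable, so Lemma~\ref{lem-positive-traceable-rho} applied to $A^2$ yields $\rho^{\frac12}\in\dom\abs A=\dom A$, in particular $A\rho^{\frac12}\in L_2(\initsp)$, together with
\begin{gather*}
\tr{\rho A^2}=\no{\abs A\rho^{\frac12}}_2^2=\no{A\rho^{\frac12}}_2^2\,.
\end{gather*}

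Next I would identify the first moment at the level of $L_2(\initsp)$. Writing $A=A_+-A_-$ with $A_\pm$ the positive and negative parts, amenability with $n=1$ gives that $\rho$ is traceable for both $A_+$ and $A_-$; since $\rho^{\frac12}\in\dom A=\dom A_+\cap\dom A_-$, Lemma~\ref{lem-positive-traceable-rho} applied to $A_\pm$ together with Remark~\ref{rm:traceable-parts-A} (or Remark~\ref{rmk:linearity-traceable}) gives
\begin{gather*}
\ip{\rho^{\frac12}}{A\rho^{\frac12}}_2=\ip{\rho^{\frac12}}{A_+\rho^{\frac12}}_2-\ip{\rho^{\frac12}}{A_-\rho^{\frac12}}_2=\tr{\rho A_+}-\tr{\rho A_-}=\tr{\rho A}=\mu\,,
\end{gather*}
a real number, as it should be.

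The remaining step is the computation itself. Since $\rho^{\frac12}\in\dom A$, the vector $(A-\mu I)\rho^{\frac12}=A\rho^{\frac12}-\mu\rho^{\frac12}$ lies in $L_2(\initsp)$, and expanding its Hilbert--Schmidt norm while using that $A$ is symmetric, that $\mu$ is real, and that $\no{\rho^{\frac12}}_2^2=\tr\rho=1$,
\begin{align*}
\no{(A-\mu I)\rho^{\frac12}}_2^2&=\no{A\rho^{\frac12}}_2^2-2\mu\,\re\ip{\rho^{\frac12}}{A\rho^{\frac12}}_2+\mu^2\no{\rho^{\frac12}}_2^2\\
&=\tr{\rho A^2}-2\mu^2+\mu^2=\tr{\rho A^2}-\tr{\rho A}^2=V_\rho(A)^2\,,
\end{align*}
which is the asserted identity. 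For the two displayed special cases one checks from $a=\tfrac1{\sqrt2}(Q+iP)$, $a^\dagger=\tfrac1{\sqrt2}(Q-iP)$ and $\sigma(z,a)=\tfrac1{2i}(\cc za-za^\dagger)$ that $\sigma(x+iy,a)=\tfrac1{\sqrt2}(xP-yQ)$, hence $\sigma(\sqrt2,a)=P$ and $\sigma(-i\sqrt2,a)=Q$; amenability makes $\tr{\rho P}$, $\tr{\rho Q}$, $V_\rho(P)^2$, $V_\rho(Q)^2$ meaningful, and substituting $z=\sqrt2$ and $z=-i\sqrt2$ into the identity just proved gives \eqref{norm-variance}.

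The main point requiring care is not a genuine obstacle but the bookkeeping of domains: one must first secure $\rho^{\frac12}\in\dom\sigma(z,a)$, so that $(\sigma(z,a)-\mu I)\rho^{\frac12}$ is an honest element of $L_2(\initsp)$ and the expansion of the square is legitimate, and one must invoke Lemma~\ref{lem-positive-traceable-rho} and Remark~\ref{rmk:linearity-traceable} (for $\sigma(z,a)^2$ and for $\sigma(z,a)=\sigma(z,a)_+-\sigma(z,a)_-$, respectively) to turn the defined symbols $\tr{\rho\,\sigma(z,a)^2}$ and $\tr{\rho\,\sigma(z,a)}$ into the concrete Hilbert--Schmidt quantities $\no{\sigma(z,a)\rho^{\frac12}}_2^2$ and $\ip{\rho^{\frac12}}{\sigma(z,a)\rho^{\frac12}}_2$ that appear after expanding. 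This is exactly where the amenability hypothesis is used.
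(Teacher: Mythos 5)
Your argument is correct and is in substance the paper's own proof: both rest on the amenability hypothesis together with Lemma~\ref{lem-positive-traceable-rho} (and Remark~\ref{rmk:linearity-traceable}/Theorem~\ref{thm:An-traceable-derivates}) to turn $\tr{\rho\,\sigma(z,a)^2}$ and $\tr{\rho\,\sigma(z,a)}$ into Hilbert--Schmidt quantities attached to $\rho^{\frac12}$, after which the identity is an elementary variance computation. The only difference is presentational: the paper completes the square inside the spectral integral $\int\lrp{\lambda-\tr{\rho\,\sigma(z,a)}}^2 d\ip{\rho^{\frac12}}{\mathbb E\rho^{\frac12}}_2$, whereas you expand $\no{(\sigma(z,a)-\mu I)\rho^{\frac12}}_2^2$ directly after securing $\sigma(z,a)\rho^{\frac12}\in L_2(\initsp)$, which is equivalent bookkeeping and, if anything, makes the domain issues more explicit.
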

\begin{proof}
Since the pair $(\rho, \sigma(z,a))$ holds the assumptions of Theorem~\ref{thm:An-traceable-derivates}, if $\mathbb E$ is the spectral measure of $\sigma(z,a)$, then
\begin{align*}
V_{\rho}(\sigma(z,a))^2&=\int\lambda^2d\ip{\rho^{\frac12}}{\mathbb E\rho^{\frac12}}_2-\tr{\rho(\sigma(z,a)}^2\\&=\int\lrp{\lambda-\tr{\rho(\sigma(z,a)}}^2 d\ip{\rho^{\frac12}}{\mathbb E\rho^{\frac12}}_2\\&=\no{(\sigma(z,a)-\tr{\rho \sigma(z,a)}I) \rho^{\frac{1}{2}}}_2^2\,,
\end{align*}
as required. 
\end{proof}
\begin{definition} Let $\rho$ be a state traceable for both $P^2$ and $Q^2$ (in particular when $\rho$ is amenable). We say that the \emph{CCR holds true weakly in the Hilbert-Schmidt sense in $\rho$} if,  
\begin{gather}\label{eq:weakly-CCR-rho}
\ip{P\rho^{\frac{1}{2}}}{Q\rho^{\frac{1}{2}}}_2 -\ip{Q \rho^{\frac{1}{2}}}{P\rho^{\frac{1}{2}}}_2=-i\,,
\end{gather}
i.e., $\im \ip{P\rho^{\frac{1}{2}}}{Q\rho^{\frac{1}{2}}}_2=-1/2$.
\end{definition} 
As a consequence of Lemma~\ref{lem-positive-traceable-rho}, both operators $P\rho^{\frac{1}{2}}$ and $Q\rho^{\frac{1}{2}}$ are Hilbert-Schmidt and, hence, the inner products in \eqref{eq:weakly-CCR-rho} are finite. So, we have the following inequality, which in turn yields the uncertainty principle.
\begin{theorem}\label{th-uncertainty-principle} 
If $\rho$ an amenable Gaussian state such that  the CCR holds weakly in the Hilbert-Schmidt sense in $\rho$, then 
\begin{align}\label{uncertainty-principle}   
\begin{split}
 V_{\rho}(P)^2  V_{\rho}(Q)^2\geq \frac{1}{4}+\lrp{\tr{\rho P}\tr{\rho Q}-\re \ip{P\rho^{\frac{1}{2}}}{Q\rho^{\frac{1}{2}}}_2}^2\,.
\end{split}
\end{align} 
Hence, the uncertainty principle $V_{\rho}(P) V_{\rho}(Q) \geq 1/2$ is true.
 \end{theorem}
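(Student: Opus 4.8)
The plan is to run the standard Cauchy--Schwarz proof of the uncertainty relation, but inside the Hilbert space $L_2(\initsp)$ and with the two vectors $\xi\ceq\big(P-\tr{\rho P}I\big)\rho^{\frac12}$ and $\eta\ceq\big(Q-\tr{\rho Q}I\big)\rho^{\frac12}$. Since $\rho$ is amenable it is $P^2$- and $Q^2$-traceable, so by Lemma~\ref{lem-positive-traceable-rho} one has $P\rho^{\frac12},Q\rho^{\frac12}\in L_2(\initsp)$; hence $\xi,\eta\in L_2(\initsp)$, the pairing $\ip{\xi}{\eta}_2$ is finite, and by the preceding Lemma together with \eqref{norm-variance} we have $\no{\xi}_2^2=V_\rho(P)^2$ and $\no{\eta}_2^2=V_\rho(Q)^2$.

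First I would expand $\ip{\xi}{\eta}_2$. Writing $p\ceq\tr{\rho P}$ and $q\ceq\tr{\rho Q}$, which are real by Corollary~\ref{mean-covariance}\eqref{it-Mom1}, and using $\ip{\rho^{\frac12}}{\rho^{\frac12}}_2=\tr{\rho}=1$ along with $\ip{P\rho^{\frac12}}{\rho^{\frac12}}_2=p$ and $\ip{\rho^{\frac12}}{Q\rho^{\frac12}}_2=q$, all the cross terms collapse and one obtains $\ip{\xi}{\eta}_2=\ip{P\rho^{\frac12}}{Q\rho^{\frac12}}_2-pq$. Taking moduli, $\abs{\ip{\xi}{\eta}_2}^2=\big(\re\ip{P\rho^{\frac12}}{Q\rho^{\frac12}}_2-pq\big)^2+\big(\im\ip{P\rho^{\frac12}}{Q\rho^{\frac12}}_2\big)^2$; the weak-CCR hypothesis \eqref{eq:weakly-CCR-rho} gives $\im\ip{P\rho^{\frac12}}{Q\rho^{\frac12}}_2=-\tfrac12$, so the second summand equals $\tfrac14$. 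Combining with Cauchy--Schwarz, $\abs{\ip{\xi}{\eta}_2}^2\le\no{\xi}_2^2\no{\eta}_2^2=V_\rho(P)^2V_\rho(Q)^2$, yields \eqref{uncertainty-principle} verbatim; dropping the nonnegative square term gives $V_\rho(P)^2V_\rho(Q)^2\ge\tfrac14$, i.e.\ $V_\rho(P)V_\rho(Q)\ge\tfrac12$.

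The one point requiring care is the pair of identities $\ip{P\rho^{\frac12}}{\rho^{\frac12}}_2=\tr{\rho P}$ and $\ip{\rho^{\frac12}}{Q\rho^{\frac12}}_2=\tr{\rho Q}$ for the unbounded $P,Q$. Here I would invoke that amenability (via $P^2$-, $Q^2$-traceability and Lemma~\ref{lem-positive-traceable-rho}) forces $\rho^{\frac12}\in\dom\mathbb P\cap\dom\mathbb Q$, where the lifted self-adjoint operators act as left multiplication by \eqref{explicit-action}; consequently these inner products equal the spectral integrals $\int\lambda\,d\ip{\rho^{\frac12}}{\mathbb E_\lambda\rho^{\frac12}}_2$, which coincide with $\tr{\rho P}$ and $\tr{\rho Q}$ by Remark~\ref{rmk:linearity-traceable}. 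Once this bookkeeping is in place, everything else is the textbook Cauchy--Schwarz argument, so I expect the main (modest) obstacle to be precisely the rigorous handling of these unbounded-operator matrix elements rather than any genuinely new estimate.
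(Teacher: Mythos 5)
Your proof is correct and follows essentially the same route as the paper: the same centered vectors $\hat P\rho^{\frac12}$, $\hat Q\rho^{\frac12}$ in $L_2(\initsp)$, the same expansion of their inner product with the weak-CCR hypothesis \eqref{eq:weakly-CCR-rho} supplying the imaginary part $-\tfrac12$, and the same Cauchy--Schwarz step combined with \eqref{norm-variance}. The only difference is that you spell out the cross-term identities $\ip{P\rho^{\frac12}}{\rho^{\frac12}}_2=\tr{\rho P}$ and $\ip{\rho^{\frac12}}{Q\rho^{\frac12}}_2=\tr{\rho Q}$ via Lemma~\ref{lem-positive-traceable-rho} and the spectral representation, a point the paper's proof treats as immediate.
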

\begin{proof} 
Due to our assumptions, operators $P\rho^{\frac{1}{2}}, Q\rho^{\frac{1}{2}}$ and $\rho^{\frac{1}{2}}$, belong to $L_2(\initsp)$. Besides, for $\hat P=P-\tr{\rho P}I$ and $\hat Q=Q-\tr{\rho Q}I$, one checks at once by \eqref{eq:weakly-CCR-rho} that
\begin{align*}
\ip{\hat P\rho^{\frac{1}{2}}}{\hat Q\rho^{\frac{1}{2}}}_2&=\ip{P\rho^{\frac{1}{2}}}{Q\rho^{\frac{1}{2}}}_2 - \tr{\rho P}\tr{\rho Q}\\&=
\re \ip{P\rho^{\frac{1}{2}}}{Q\rho^{\frac{1}{2}}}_2 -\tr{\rho P}\tr{\rho Q}-\frac i2\,.
\end{align*}
Therefore, by Schwarz inequality, 
\begin{align*}
\no{\hat P\rho^{\frac{1}{2}}}_2^2\no{\hat Q\rho^{\frac{1}{2}}}_2^2
&\geq\abs{\ip{(P-\tr{\rho P}I)\rho^{\frac{1}{2}}}{(Q-\tr{\rho Q}I)\rho^{\frac{1}{2}}}_2}^2\\&=\lrp{\tr{\rho P}\tr{\rho Q}-\re \ip{P\rho^{\frac{1}{2}}}{Q\rho^{\frac{1}{2}}}_2}^2+\frac14\,,
\end{align*}
wherefrom by \eqref{norm-variance}, one arrives at \eqref{uncertainty-principle}.
\end{proof}

We now describe the covariance matrix of an amenable Gaussian state. 

\begin{proposition}\label{prop:entries-positiveS} If $\rho=\rho(w,S)$ is an amenable Gaussian state, then $S$ is positive definite, with entries
\begin{align}\label{eq:entries-S-amenable}
\begin{split}
(1,S1)&=2V_\rho(P)^2\,,\qquad(i,Si)=2V_\rho(Q)^2\,,\\
(1,Si)&=2\lrp{\tr{\rho P}\tr{\rho Q}-\re \ip{P\rho^{\frac{1}{2}}}{Q\rho^{\frac{1}{2}}}_2}\,.
\end{split}
\end{align}
\end{proposition}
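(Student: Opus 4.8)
The plan is to read the three entries of $S$ off Corollary~\ref{mean-covariance} after rewriting the traces occurring there in Hilbert--Schmidt language, and then to deduce positivity of $S$ from the defining formula for $\qF\rho$ together with the variance identity established above. First I would compute the entries. Since $\rho$ is amenable it is traceable for $\sigma(z,a)_{\pm}$ and for $\sigma(z,a)^2$ for every $z$, so Corollary~\ref{mean-covariance} applies: parts (i) and (ii) give $\tr{\rho P}=l$, $\tr{\rho Q}=m$, $(1,S1)=2(\tr{\rho P^2}-l^2)=2V_\rho(P)^2$ and $(i,Si)=2(\tr{\rho Q^2}-m^2)=2V_\rho(Q)^2$, which are the first two identities in \eqref{eq:entries-S-amenable}. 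For the third, I would start from the formula $(1,Si)=\tr{\rho(P-Q)^2}-\tr{\rho P^2}-\tr{\rho Q^2}+2lm$ recorded at the end of Corollary~\ref{mean-covariance}, and rewrite $\tr{\rho(P-Q)^2}$: identifying $P-Q=\sigma(\sqrt2(1+i),a)$, amenability makes $\rho$ traceable for the positive selfadjoint operator $(P-Q)^2$, whence Lemma~\ref{lem-positive-traceable-rho} gives $\tr{\rho(P-Q)^2}=\no{(P-Q)\rho^{1/2}}_2^2$. Expanding this norm and applying the same lemma to $P^2=\sigma(\sqrt2,a)^2$ and $Q^2=\sigma(i\sqrt2,a)^2$ yields $\tr{\rho(P-Q)^2}=\tr{\rho P^2}+\tr{\rho Q^2}-2\re\ip{P\rho^{1/2}}{Q\rho^{1/2}}_2$. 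Substituting, and using $l=\tr{\rho P}$, $m=\tr{\rho Q}$, produces $(1,Si)=2\big(\tr{\rho P}\tr{\rho Q}-\re\ip{P\rho^{1/2}}{Q\rho^{1/2}}_2\big)$, the last identity in \eqref{eq:entries-S-amenable}.

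Next I would prove $S>0$. From Corollary~\ref{mean-covariance}(i)--(ii) and the preceding variance lemma, $(z,Sz)=4\big(\tr{\rho\sigma(z,a)^2}-\tr{\rho\sigma(z,a)}^2\big)=4\no{\big(\sigma(z,a)-\tr{\rho\sigma(z,a)}I\big)\rho^{1/2}}_2^2\ge 0$, so $S$ is positive semidefinite. To upgrade this to strict positivity I would use that any state $\rho\in L_1(\initsp)$ is Hilbert--Schmidt, $\no{\rho}_2\le\no{\rho}_1=1$, so by the Parseval identity \eqref{parceval-id} one has $\int_{\C}|\qF\rho(z)|^2\,dz=\no{\rho}_2^2<\infty$; since $|\qF\rho(z)|^2=\pi^{-1}e^{-(z,Sz)}$, diagonalizing the real symmetric form $(z,Sz)$ by an orthogonal (hence measure-preserving) change of coordinates turns the integral into a product of one-dimensional Gaussian integrals, which is finite precisely when both eigenvalues of $S$ are strictly positive; hence $S>0$. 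The uncertainty principle refines this picture: combined with \eqref{eq:entries-S-amenable}, Theorem~\ref{th-uncertainty-principle} gives the sharper bound $\det S=4\big(V_\rho(P)^2V_\rho(Q)^2-(\tr{\rho P}\tr{\rho Q}-\re\ip{P\rho^{1/2}}{Q\rho^{1/2}}_2)^2\big)\ge 1$, which again forces positive definiteness once the CCR is known to hold weakly in the Hilbert--Schmidt sense in $\rho$.

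The main obstacle I expect lies in the domain bookkeeping behind $\tr{\rho(P-Q)^2}=\no{(P-Q)\rho^{1/2}}_2^2$ and its $P^2$, $Q^2$ analogues: one must check that amenability genuinely delivers traceability of these squared operators through the identifications $P=\sigma(\sqrt2,a)$, $Q=\sigma(i\sqrt2,a)$, $P-Q=\sigma(\sqrt2(1+i),a)$, invoke Lemma~\ref{lem-positive-traceable-rho} to pass from traces to Hilbert--Schmidt norms, and, for the cross term to be meaningful, use that $\ran\rho^{1/2}\subset\dom P\cap\dom Q$ so that $P\rho^{1/2},Q\rho^{1/2}\in L_2(\initsp)$ --- the fact recalled just after \eqref{eq:weakly-CCR-rho}. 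Everything else is routine manipulation with the matrix representation of Remark~\ref{rm:properties-wS}.
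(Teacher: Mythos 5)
Your computation of the three entries follows essentially the paper's own route: feed $z=\sqrt2$, $z=i\sqrt2$ and $z=\sqrt2(1+i)$ into Corollary~\ref{mean-covariance}, then use Lemma~\ref{lem-positive-traceable-rho} to rewrite $\tr{\rho(P-Q)^2}=\no{(P-Q)\rho^{1/2}}_2^2=\tr{\rho P^2}+\tr{\rho Q^2}-2\re\ip{P\rho^{1/2}}{Q\rho^{1/2}}_2$ and solve for $(1,Si)$; this matches the paper line by line (a harmless slip: $\sigma(i\sqrt2,a)=-Q$ rather than $Q$, irrelevant since you only use its square). Where you genuinely diverge is the positivity claim. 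The paper shows $(z,Sz)=\tr{\rho\lrp{2\sigma(z,a)-(w,z)I}^2}\geq 0$, which as written only gives positive \emph{semi}definiteness; your extra step --- $\rho\in L_2(\initsp)$ with $\no{\rho}_2\leq\no{\rho}_1$, so by \eqref{parceval-id} $\int_{\C}\abs{\qF\rho(z)}^2dz<\infty$, while $\abs{\qF\rho(z)}^2=\pi^{-1}e^{-(z,Sz)}$, and integrability of this Gaussian forces both eigenvalues of the symmetric matrix $S$ to be strictly positive --- actually delivers the strict positive definiteness asserted in the statement, and does so without the weak-CCR hypothesis \eqref{eq:weakly-CCR-rho} that Theorem~\ref{th-uncertainty-principle} and Corollary~\ref{positivity-S} require. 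You are also right to treat the uncertainty bound as a remark rather than a step of the proof, precisely because of that additional assumption. So your proposal is correct, agrees with the paper on the entries, and on the positivity point gives a cleaner and slightly stronger argument than the paper's own.
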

\begin{proof}
It is clear from Remark~\ref{rmk:linearity-traceable} that $\rho$ is traceable for $(2\sigma(z,a)-(w,z)I)^2$, since it is amenable. Thus, Lemma~\ref{lem-positive-traceable-rho}, Theorem~\ref{thm:An-traceable-derivates} and \eqref{g-moments} imply
\begin{align*}
(z,Sz)&=\langle (2\sigma(z,a)-(w,z)I)^2\rangle_{\rho}\\&=\tr{\rho(2\sigma(z,a)-(w,z)I)^2}\geq0\,.
\end{align*}
Now, note that $\rho$ is also traceable for $P^2,Q^2$ and $(P-Q)^2$. So, Lemma~\ref{lem-positive-traceable-rho} fulfils  
\begin{align*}
\tr{\rho(P-Q)^2}&=\no{(P-Q)\rho^{\frac12}}_2^2\\&=\tr{\rho P^2}+\tr{\rho Q^2}-2\re \ip{P\rho^{\frac{1}{2}}}{Q\rho^{\frac{1}{2}}}_2\,,
\end{align*}
whence from Corollary~\ref{mean-covariance}, one produces \eqref{eq:entries-S-amenable}. 
\end{proof}
The following is straightforward from Theorem~\ref{th-uncertainty-principle} and Proposition~\ref{prop:entries-positiveS}.

\begin{corollary}\label{positivity-S}
If $\rho$ is a Gaussian state with covariance matrix $S$ and satisfies the assumptions in Theorem~\ref{th-uncertainty-principle}, then 
\begin{gather}\label{uncertainty-p-1}
(1,S1)(i,Si)\geq (1,Si)^2 + 1 \,.
\end{gather}
Therefore, $S$ is an invertible matrix such that 
\begin{gather}\label{uncertainty-p-2}
S-i\Sigma\geq 0\,,\quad\mbox{where}\quad \Sigma=\begin{pmatrix}
0&-1\\1&0
\end{pmatrix}\,.
\end{gather} 
 \end{corollary}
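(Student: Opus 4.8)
The plan is to derive \eqref{uncertainty-p-1} directly from Theorem~\ref{th-uncertainty-principle} and then deduce \eqref{uncertainty-p-2} by a matrix computation. First I would invoke Proposition~\ref{prop:entries-positiveS} to write $(1,S1)=2V_\rho(P)^2$, $(i,Si)=2V_\rho(Q)^2$ and $(1,Si)=2\bigl(\tr{\rho P}\tr{\rho Q}-\re\ip{P\rho^{\frac12}}{Q\rho^{\frac12}}_2\bigr)$. Multiplying the first two gives $(1,S1)(i,Si)=4V_\rho(P)^2V_\rho(Q)^2$, and substituting the lower bound \eqref{uncertainty-principle} from Theorem~\ref{th-uncertainty-principle} yields
\begin{align*}
(1,S1)(i,Si)&=4V_\rho(P)^2V_\rho(Q)^2\geq 4\left[\frac14+\Bigl(\tr{\rho P}\tr{\rho Q}-\re\ip{P\rho^{\frac12}}{Q\rho^{\frac12}}_2\Bigr)^2\right]\\
&=1+(1,Si)^2\,,
\end{align*}
which is exactly \eqref{uncertainty-p-1}. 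This is the substantive content and it is essentially immediate once the right substitutions are made; I expect no real obstacle here.

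Next I would record that $S$ is invertible: by Proposition~\ref{prop:entries-positiveS} it is positive semidefinite, and \eqref{uncertainty-p-1} gives $\det S=(1,S1)(i,Si)-(1,Si)^2\geq 1>0$, so $S$ is in fact positive definite with nonzero determinant, hence invertible. For the last claim, I would compute the $2\times 2$ Hermitian matrix
\begin{gather*}
S-i\Sigma=\begin{pmatrix}(1,S1)& (1,Si)+i\\ (1,Si)-i & (i,Si)\end{pmatrix}\,,
\end{gather*}
using $(1,Si)=(i,S1)$ from Remark~\ref{rm:properties-wS} and the explicit form of $\Sigma$. This matrix is positive semidefinite if and only if its diagonal entries are nonnegative and its determinant is nonnegative; the diagonal entries $(1,S1),(i,Si)$ are nonnegative by positivity of $S$ (indeed strictly positive by \eqref{uncertainty-p-1}), and
\begin{gather*}
\det(S-i\Sigma)=(1,S1)(i,Si)-\bigl((1,Si)+i\bigr)\bigl((1,Si)-i\bigr)=(1,S1)(i,Si)-(1,Si)^2-1\geq 0
\end{gather*}
precisely by \eqref{uncertainty-p-1}. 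Hence $S-i\Sigma\geq 0$, completing the proof.

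The only mild subtlety worth spelling out is the equivalence ``Hermitian $2\times2$ matrix $\geq 0$ iff diagonal entries and determinant are nonnegative''; I would either cite it as standard (Sylvester-type criterion, noting that a $2\times2$ Hermitian matrix with nonnegative trace and nonnegative determinant has both eigenvalues nonnegative) or give the one-line argument via the characteristic polynomial $\lambda^2-(\operatorname{tr})\lambda+\det$. Everything else is bookkeeping with the entries of $S$ already computed in Proposition~\ref{prop:entries-positiveS}, so there is no genuine hard step — the corollary is, as the text says, straightforward from the two preceding results.
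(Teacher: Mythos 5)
Your proposal is correct and matches the paper's intended argument: the paper simply declares the corollary straightforward from Theorem~\ref{th-uncertainty-principle} and Proposition~\ref{prop:entries-positiveS}, and your substitution of the entries \eqref{eq:entries-S-amenable} into \eqref{uncertainty-principle}, followed by the $2\times2$ Hermitian determinant/diagonal check for $S-i\Sigma\geq 0$, is exactly the computation being left to the reader. No gaps; the Sylvester-type criterion you mention is indeed the only detail worth a word.
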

\begin{remark} Some significant notes related to this section:
\begin{itemize}
\item[(i)] We emphasize that we work with the adjugate (or classical adjoint) $S$ of the conventional covariance matrix usually used in many other good references on Gaussian states. This explains the minus sign of the off-diagonal matrix elements of $S$ in  \eqref{eq:entries-S-amenable}.
\item[(ii)] A different proof of Corollary \ref{positivity-S} and its converse, based on the Quantum Bochner's theorem in $L_{2}({\mathbb R})$ and properties of infinitely divisible positive definite kernels, can be found in \cite[Thm.\,3.1]{MR2662722}.  
\item[(iii)] K.R. Parthasarthy interpreted inequality \eqref{uncertainty-p-2} as the \textit{complete uncertainty principle} (c.f. \cite[Rmk.\,3.2]{MR2662722}). We emphasize that \eqref{uncertainty-p-1} and, hence, \eqref{uncertainty-p-2} correspond with Schwartz inequality \eqref{uncertainty-principle}. 
\end{itemize}
\end{remark}

\subsection{A characterization of Gaussian states}
We are ready to prove the converse of Corollary \ref{necessity}, which completes a characterization of quantum Gaussian states in terms of its annihilation moments.  
\begin{definition}
The \emph{moment-generating} function of an observable, i.e., a selfadjoint operator $A$ in the state $\rho$, is given by 
\begin{gather*}
g_{\rho,A}(t)=\sum_{n\geq 0} \frac{1}{n!}\langle A^n\rangle_{\rho} t^n\,,
\end{gather*}
whenever all moments are finite and the series converges.
\end{definition}
\begin{lemma}\label{generating-moments} 
Let $\rho$ be a traceable state for both $(\sigma(z,a)^n)_\pm$, for all $n\geq 0, z\in\C$ and suppose that the sequence of annihilation moments in $\rho$ satisfies the recurrence relation 
\eqref{g-moments}, for some $w\in{\mathbb C}$ and symmetric real matrix $S$. Then, the moment-generating function of the observable $2\sigma(z,a)$ in $\rho$ fulfils 
\begin{gather*}
g_{\rho,2\sigma(z,a)}(t)=e^{(w,z)t + \frac{1}{2}(z, Sz)t^2}\,,\quad  t\in{\mathbb R}\,.
\end{gather*}
\end{lemma}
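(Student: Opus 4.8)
The plan is to compute the moment-generating function $g_{\rho,2\sigma(z,a)}(t)=\sum_{n\geq 0}\frac{t^n}{n!}\langle(2\sigma(z,a))^n\rangle_\rho$ directly from the recurrence relation \eqref{g-moments}, and recognize the resulting series as the claimed exponential. First I would fix $z\in\C$ and abbreviate $c\ceq(w,z)$ and $s\ceq(z,Sz)$, so that \eqref{g-moments} reads $\langle(2\sigma(z,a)-cI)^n\rangle_\rho=0$ for $n$ odd and $=s^{n/2}(n-1)!!$ for $n$ even. Since $\rho$ is traceable for all $(\sigma(z,a)^n)_\pm$, Theorem~\ref{thm:An-traceable-derivates} gives $\langle A^n\rangle_\rho=\tr{\rho A^n}$ for $A=2\sigma(z,a)$ (and, using Remark~\ref{rmk:linearity-traceable}, also for $A=2\sigma(z,a)-cI$), so all these moments are genuine traces and the manipulations below are with honest numbers.

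The key computational step is to introduce the centred moments $M_n\ceq\langle(2\sigma(z,a)-cI)^n\rangle_\rho$ and note that their generating function is exactly the generating function of the standard Gaussian moments: $\sum_{n\geq 0}\frac{t^n}{n!}M_n=\sum_{k\geq 0}\frac{t^{2k}}{(2k)!}s^k(2k-1)!!=\sum_{k\geq 0}\frac{(s t^2/2)^k}{k!}=e^{\frac12 s t^2}$, using $(2k-1)!!=(2k)!/(2^k k!)$. To pass from the centred to the uncentred generating function, I would use the expansion from the Remark following Corollary~\ref{necessity}, namely $\langle(2\sigma(z,a)-cI)^n\rangle_\rho=\sum_{k=0}^n(-1)^k\binom nk\langle(2\sigma(z,a))^{n-k}\rangle_\rho c^k$, which is valid because $\sigma(z,a)$ and $cI$ commute and the disentangling is justified by Theorem~\ref{thm:An-traceable-derivates} applied to the commuting pair. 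This is a binomial-convolution identity, so summing against $t^n/n!$ turns it into a product of exponential generating functions: $e^{\frac12 s t^2}=\big(\sum_{n}\frac{t^n}{n!}\langle(2\sigma(z,a))^n\rangle_\rho\big)\cdot e^{-ct}$, i.e. $g_{\rho,2\sigma(z,a)}(t)=e^{ct+\frac12 s t^2}=e^{(w,z)t+\frac12(z,Sz)t^2}$, which is the assertion.

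The main obstacle is not the algebra but the analytic bookkeeping: one must check that the series defining $g_{\rho,2\sigma(z,a)}(t)$ actually converges (so that ``moment-generating function'' is meaningful) and that the rearrangement of the double sum $\sum_n\frac{t^n}{n!}\sum_k(-1)^k\binom nk(\cdots)$ into a Cauchy product is legitimate. For convergence I would bound the uncentred moments via the triangle-type estimate $|\langle(2\sigma(z,a))^{m}\rangle_\rho|\le\sum_{j=0}^m\binom mj|M_j|\,|c|^{m-j}$ together with the exact formula $|M_j|\le |s|^{j/2}(j-1)!!\le (|s|j)^{j/2}$; this grows only like $(Cm)^{m/2}$, which is dominated by $m!$-type growth enough to give an infinite radius of convergence for $g_{\rho,2\sigma(z,a)}$, so every rearrangement is absolutely convergent and the formal identity between exponential generating functions is a genuine identity of entire functions of $t$. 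I would state this estimate, invoke absolute convergence to justify the Cauchy product, and then read off the closed form; the rest is the routine $(2k-1)!!$ identity already indicated.
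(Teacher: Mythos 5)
Your argument is correct, but it follows a genuinely different route from the paper. The paper pivots through the spectral measure of $\sigma(z,a)$ on $L_2(\initsp)$: writing $\mu(d\lambda)=d\ip{\rho^{1/2}}{\mathbb E_\lambda\rho^{1/2}}_2$, it identifies $e^{-(w,z)t}\sum_n\frac{t^n}{n!}\tr{\rho(2\sigma(z,a))^n}$ with the single integral $\int e^{-t(w,z)+2t\lambda}\,d\mu$ (interchanging sum and integral by dominated convergence), re-expands that integral as the centred series $\sum_n\frac{t^n}{n!}\tr{\rho(2\sigma(z,a)-(w,z)I)^n}$, and then evaluates it with \eqref{g-moments} and the identity $(2n-1)!!=(2n)!/(2^n n!)$. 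You instead stay entirely at the level of the moment sequences: you compute the centred exponential generating function directly from \eqref{g-moments}, pass to the uncentred one via the binomial relation between centred and uncentred moments (legitimate here by Theorem~\ref{thm:An-traceable-derivates} together with the linearity in Remark~\ref{rmk:linearity-traceable}, since all powers are traceable), and justify the Cauchy-product rearrangement by your explicit bound $|M_j|\le(|s|j)^{j/2}$, which gives an entire function of $t$. The trade-off: the paper's spectral-integral pivot is shorter and makes the recentering a one-line change of variables inside the integral, but it leaves implicit exactly the exponential integrability $\int e^{2|t\lambda|}d\mu<\infty$ that the sum--integral interchange requires; your Gaussian-type growth estimate supplies precisely that missing bookkeeping and, as a bonus, shows $g_{\rho,2\sigma(z,a)}$ converges absolutely for all complex $t$, so the substitution $t\mapsto -i$ used later in Theorem~\ref{characterization-gaussian-state} is equally available from your version.
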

\begin{proof}
Let $\mathbb E$ be the spectral measure of $\sigma(z,a)$ on $L_2(\initsp)$. By Theorem~\ref{thm:An-traceable-derivates}, Lebesgue's Theorem on Dominated Convergence and \eqref{g-moments}, one has  that 
\begin{align}\label{proof-lemma-generating}
\begin{split}
e^{-(w,z)t}\sum_{n\geq 0} \frac{1}{n!}\tr{\rho\big(2\sigma(z,a)\big)^n} t^n = & \int e^{-t(w,z)+2t\lambda} d\ip{\rho^{\frac12}}{\mathbb E\rho^{\frac12}}_2\\=&\sum_{n\geq 0} \frac{1}{n!}\tr{\rho \big(2\sigma(z,a)- (w,z)I\big)^n} t^n \\=&\sum_{n\geq 0} \frac{1}{n!}\big\langle\big(2\sigma(z,a)- (w,z)I\big)^n\big\rangle_{\rho} t^n \\=&\sum_{n\geq 0} \frac{1}{(2n)!} (z,Sz)^n (2n-1)!! t^{2n} \\=& \sum_{n\geq 0} \frac{1}{n!}\big(\frac{(z,Sz)}{2}\big)^n t^{2n}=e^{\frac{1}{2}(z,Sz)t^2}\,,
\end{split}
\end{align}
since $(2n-1)!!=(2n)!(2^nn!)^{-1}$.
\end{proof}

\begin{theorem}\label{characterization-gaussian-state} 
If $\rho$ is a state such that the pair $(\rho, \sigma(z,a)^n)$ satisfies conditions of the previous Lemma \ref{generating-moments}, then $\rho$ is Gaussian. 
\end{theorem}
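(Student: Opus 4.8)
The plan is to recover the non-commutative Fourier transform $\qF\rho$ from the annihilation moments and show it has exactly the Gaussian form. By Lemma~\ref{generating-moments} we already know the moment-generating function of $2\sigma(z,a)$ in $\rho$, namely $g_{\rho,2\sigma(z,a)}(t)=e^{(w,z)t+\frac12(z,Sz)t^2}$. The natural bridge is the characteristic function $\varphi_z[\rho](t)=\mathcal F[\rho](tz)=\pi^{-1/2}\tr{\rho W_{tz}}=\pi^{-1/2}\tr{\rho e^{2it\sigma(z,a)}}$, where I use \eqref{eq:symplectic-exponential-representation}. The goal is to prove $\sqrt{\pi}\,\varphi_z[\rho](t)=e^{-it(w,z)-\frac12 t^2(z,Sz)}$ for all real $t$ and all $z\in\C$; evaluating at $t=1$ then gives $\qF\rho(z)=\pi^{-1/2}e^{-i(w,z)-\frac12(z,Sz)}$, which is precisely the definition of a Gaussian state $\rho(w,S)$.

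First I would fix $z$ and work with the self-adjoint operator $B=2\sigma(z,a)$, whose spectral measure on $L_2(\initsp)$ I call $\mathbb E$ (via the identification \eqref{explicit-action}); by Lemma~\ref{lem-positive-traceable-rho} the traceability hypotheses give $\tr{\rho B^n}=\int\lambda^n\,d\ip{\rho^{1/2}}{\mathbb E\rho^{1/2}}_2$ for all $n$, and these are the finite annihilation moments. The key analytic point is that the generating-function identity in \eqref{proof-lemma-generating} in fact shows $\int e^{2t\lambda}\,d\ip{\rho^{1/2}}{\mathbb E\rho^{1/2}}_2=e^{(w,z)t+\frac12(z,Sz)t^2}<\infty$ for every real $t$; that is, the probability measure $\mu_z(\cdot)=\ip{\rho^{1/2}}{\mathbb E(\cdot)\rho^{1/2}}_2$ has a finite bilateral Laplace transform on all of $\R$. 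A distribution with finite moment-generating function on an open interval is determined by its moments, so $\mu_z$ must be exactly the Gaussian law with mean $(w,z)/2$ and variance $(z,Sz)/4$; equivalently, its Fourier transform is $\int e^{it\lambda}\,d\mu_z(\lambda)=e^{\frac12 it(w,z)-\frac18 t^2(z,Sz)}$. Rescaling $t\mapsto 2t$ and using $\tr{\rho e^{itB}}=\int e^{it\lambda}\,d\mu_z(\lambda)$ (justified as in the proof of Theorem~\ref{thm:An-traceable-derivates}, since $e^{it\lambda}$ is bounded) yields $\tr{\rho e^{2it\sigma(z,a)}}=e^{it(w,z)-\frac12 t^2(z,Sz)}$.

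Concretely I would organize it as: (1) invoke \eqref{proof-lemma-generating} to get $\int e^{2t\lambda}d\mu_z=e^{(w,z)t+\frac12(z,Sz)t^2}$ for all $t\in\R$; (2) from the analyticity of $s\mapsto\int e^{s\lambda}d\mu_z$ on a strip, or from the classical moment-determinacy criterion (finite MGF on a neighbourhood of $0$ forces uniqueness), conclude $\mu_z$ is $N\!\big((w,z)/2,(z,Sz)/4\big)$; (3) substitute the known Fourier transform of this Gaussian into $\tr{\rho e^{2it\sigma(z,a)}}=\int e^{2it\lambda}d\mu_z$ to obtain $\sqrt{\pi}\,\varphi_z[\rho](t)=e^{-it(w,z)-\frac12 t^2(z,Sz)}$ (the sign convention matching $W_{tz}=e^{-2it\sigma(z,a)}$); (4) set $t=1$ and vary $z$ to read off $\qF\rho(z)=\pi^{-1/2}e^{-i(w,z)-\frac12(z,Sz)}$, i.e.\ $\rho=\rho(w,S)$. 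One should also note $S$ is automatically symmetric and, being a covariance by step (2), positive semidefinite, so the Gaussian-state definition is genuinely met.

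The main obstacle is step (2): passing from equality of moments (and of the formal moment-generating series) to the actual distributional identity $\mu_z=N\!\big((w,z)/2,(z,Sz)/4\big)$. This is exactly where a Carleman-type or MGF-analyticity argument is needed — the Stieltjes moment problem is not determinate in general, so one must genuinely use that the bilateral Laplace transform $\int e^{s\lambda}d\mu_z$ is finite (hence holomorphic) on an open complex strip containing the imaginary axis, which forces the characteristic function to coincide with that of the Gaussian by analytic continuation. The degenerate case $(z,Sz)=0$, where $\mu_z$ is a point mass at $(w,z)/2$, should be handled as a trivial sub-case of the same argument. Everything else — the spectral rewriting, the dominated-convergence justification for $\tr{\rho e^{itB}}=\int e^{it\lambda}d\mu_z$, and the final substitution — is routine given Lemma~\ref{lem-positive-traceable-rho} and Theorem~\ref{thm:An-traceable-derivates}.
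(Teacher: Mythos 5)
Your proposal is correct and follows essentially the same route as the paper: the paper's proof simply substitutes $t=-i$ into the chain \eqref{proof-lemma-generating} from Lemma~\ref{generating-moments} to read off $\tr{\rho W_z}=e^{-i(w,z)-\frac12(z,Sz)}$, and your moment-determinacy/analytic-continuation argument (finite two-sided Laplace transform of the spectral measure $\Rightarrow$ the law is the Gaussian one $\Rightarrow$ its Fourier transform gives $\tr{\rho e^{-2it\sigma(z,a)}}$) is precisely a rigorous justification of that formal substitution. The only blemishes are minor bookkeeping slips (the sign in $W_{tz}=e^{-2it\sigma(z,a)}$ in your first paragraph and a factor-of-two inconsistency between taking $B=2\sigma(z,a)$ and quoting mean $(w,z)/2$, variance $(z,Sz)/4$), which cancel out in your final formula.
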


\begin{proof} 
Replacing $t$ by $-i$ in \eqref{proof-lemma-generating}, we get \[\tr{\rho W_z}=\sum_{n\geq 0} \frac{1}{n!}\tr{\rho\big(-2i\sigma(z,a)\big)^n} =e^{-i(w,z)-\frac12(z,Sz)}\] Consequently, $\rho$ is Gaussian. 
\end{proof}

\section{Gaussian channels and semigroups}\label{channels-semigroups}
It is well-known that a channel is a completely positive trace preserving map acting on ${\mathcal B}(\initsp)$. Besides, if $T$ is such a channel in Schr$\ddot{\rm o}$dinger (or predual) representation and $\rho_{in}=\rho$ is an initial state, then $T$ maps $\rho$ in an output state $\rho_{out}=T(\rho)$. For $\initsp=L_{2}({\mathbb R})$, we say that $T$ is a \textit{Gaussian channel} if for any initial Gaussian state $\rho$ the output state $T(\rho)$ is also Gaussian.
\begin{example}[Coherent channel]
For $u\in\C$, the coherent channel $T_u(\rho)=W_u\rho W_u^*$ is Gaussian. Indeed, if $\rho=\rho(w,S)$ is a Gaussian state, then 
\begin{gather*}
\sqrt{\pi}\qF{W_u\rho W_u^*}(z)=e^{-2i\sigma(u,z)}\qF\rho(z)=e^{-i(w-2iu,z)-\frac12(z,Sz)}\,,
\end{gather*} 
i.e., $T_u(\rho)$ is a Gaussian state with mean value $w-2iu$ and covariance $S$. In particular, if $\rho$ is the vacuum state (v.s. Example~\ref{ex:vacuum-satate}), then the coherent state $\vk{\psi_u}\vb{\psi_u}=T_u(\vk{\unit}\vb{\unit})$ is Gaussian with mean value vector $-2iu$ and covariance matrix $I$.
\end{example}

\begin{example}[Squeezing channel]\label{eg:Squeezing-channel} We regard the \emph{squeezing} operator given by
\begin{gather*}
S_\zeta\ceq e^{\frac12(\zeta{a^\dagger}^2-\cc \zeta a^2)}\,,\qquad \zeta\in\C
\end{gather*}
which is unitary and satisfies $S_\zeta^*=S_{-\zeta}$. Now, if we denote $A=-\frac12(\zeta{a^\dagger}^2-\cc \zeta a^2)$, then a simple computation shows that $\ad_A(za^\dagger-\overline{z} a)=-(\zeta\cc za^\dagger-\cc\zeta za)$, where 
$\ad_X (Y)=[X,Y]$. In this fashion, for $\zeta=re^{i\theta}$ and $k\geq0$, one produces that  
\begin{align*}
\ad^{2k}_A(za^\dagger-\overline{z} a)=r^{2k}(za^\dagger-\cc{z} a)\quad\mbox{;}\quad \ad^{2k+1}_A(za^\dagger-\cc{z}a)=(-1)^{2k+1}r^{2k}(\zeta\cc z a^\dagger-\cc{\zeta}z a)\,.
\end{align*}
So, the well-known relation $e^{A}(za^\dagger-\cc{z}a)e^{-A}=e^{\ad_A}(za^\dagger-\cc{z}a)$ implies
\begin{align*}
S_\zeta^* (za^\dagger-\cc{z} a) S_\zeta&=\sum_{k=0}^\infty \frac{1}{k!}\ad^k_A (za^\dagger-\cc{z} a) \\&=\sum_{k=0}^\infty r^{2k} \lrp{\frac{1}{(2k)!}(za^\dagger-\cc{z} a)+\frac{(-1)^{2k+1}}{(2k+1)!} (\zeta\cc{z}a^\dagger-\cc{\zeta}z a)}\\
&=(za^\dagger-\cc{z} a)\cosh r-(e^{i\theta}\cc z a^\dagger-e^{-i\theta}za)\sinh r \\
&=(z\cosh r -\cc z e^{i\theta}\sinh r)a^\dagger -(\cc{z}  \cosh r - ze^{-i\theta} \sinh r)a\\&=U_\zeta z a^\dagger- \overline{U_\zeta z }a\,.
\end{align*}
where $U_\zeta$ is the invertible, symmetric and real matrix 
\begin{gather*}
U_\zeta\ceq\begin{pmatrix}
\cosh r-\sinh r\cos\theta&-\sinh r\sin\theta\\
-\sinh r\sin\theta&\cosh r+\sinh r\cos\theta
\end{pmatrix}\in{\mathcal B}_{\mathbb R}({\C})
\end{gather*}
one readily  (c.f. \cite{MR1452194}) obtains  
\begin{gather}\label{eq:ordering-WS}
W_z S_\zeta=S_\zeta W_{U_\zeta z}\,,\qquad U_\zeta z=z\cosh r-\cc ze^{i\theta}\sinh r\,.
\end{gather}

Hence, one can assert that the squeezing channel $\mathcal S_\zeta(\rho)=S_\zeta\rho S_\zeta^*$ is a Gaussian channel. Indeed, for a Gaussian state $\rho=\rho(w,S)$, one computes by \eqref{eq:ordering-WS} that
\begin{align*}
\sqrt{\pi}\qF{S_\zeta\rho S_\zeta^*}(z)&=\tr{\rho S_\zeta^*W_zS_\zeta}=\tr{\rho W_{U_\zeta z}}\\&=e^{-i(w,U_\zeta z)-\frac12 (U_\zeta z,SU_\zeta z)}=e^{-i(U_\zeta w,z)-\frac12 (z,U_\zeta SU_\zeta z)}\,,
\end{align*}
i.e.,  $\mathcal S_\zeta(\rho)$ is Gaussian with mean value vector $U_\zeta w$ and covariance matrix $U_\zeta SU_\zeta$.
\end{example}
\begin{example}[Squeezed states] On account of \eqref{eq:ordering-WS}, one has that the order of composing the Weyl and squeezing operators is only a convention to define the squeezed states. Actually, if $\rho(w,S)$ is a Gaussian state, then the following state
\begin{gather*}
S_\zeta W_u\rho W_u^* S_\zeta^*\qquad \lrp{\mbox{resp.}\quad
W_uS_\zeta\rho S_\zeta^*W_u^*}
\end{gather*}
is also Gaussian with mean value vector $U_\zeta (w-2iu)$ and covariance matrix $U_\zeta SU_\zeta$ (resp. $U_\zeta w-2iu$ and $U_\zeta SU_\zeta$). Besides, it is easy to verify from \eqref{eq:ordering-WS} that 
\begin{gather*}
W_uS_\zeta\rho S_\zeta^*W_u^*=S_\zeta W_{U_\zeta z}\rho W_{U_\zeta z}^*S_\zeta^*\,.
\end{gather*}
\begin{remark}
The squeezed coherent state is given by
\begin{gather*}
\rho_s=S_\zeta W_u\vk{\unit}\vb{\unit} W_u^* S_\zeta^*\,,\quad(\zeta,u\in\C)
\end{gather*}
which is Gaussian with mean value vector $-2U_\zeta iu$ and covariance matrix $U_\zeta^2$. Clearly, $\det U_\zeta^2=1$ and, hence, equality holds in  Schwartz inequality  \eqref{uncertainty-p-1} (see also \eqref{uncertainty-principle}). 

Particularly, if $\zeta\in\mathbb R$ and $u=a+ib\in\C$, then the mean momentum and 
position vectors, and the covariance matrix of $\rho_s$ are given by (q.v. Remark~\ref{rm:properties-wS})
\begin{gather*}
l=\sqrt{2}e^{-\abs{\zeta}}b\,,\quad m=\sqrt{2}e^{\abs{\zeta}}a\quad\mbox{and}\quad S=\begin{pmatrix}
e^{-2\abs{\zeta}}&0\\0&e^{2\abs{\zeta}}
\end{pmatrix}\,,
\end{gather*}
respectively. Hence, the squeezed coherent states saturate the Heisenberg uncertainty principle $V_{\rho_s}(P) V_{\rho_s}(Q) = 1/2$. 
\end{remark}
\end{example}

\begin{example}[Bosonic Gaussian channels]
Let $x$ and $y$ be real-valued mean zero random variables with joint normal distribution and let $z= x + iy$. Set 
\begin{gather*}
T (\rho) = {\mathbb E} W_{z} \rho W_{z}^*\,,
\end{gather*}
where ${\mathbb E}$ denotes expectation with respect to the probability distribution of $z$. If $\rho=\rho_g (w, S)$ is a Gaussian state with ${w=\sqrt{2}(l - im)}$, then one computes that 
\begin{align*}
\qF{T(\rho)}(u)&=\frac1{\sqrt{\pi}}\E\tr{W_z\rho W_{-z}W_u}=\frac1{\sqrt{\pi}}\E e^{-2i\sigma(u,z)}\tr{\rho W_u}\\&=\E e^{-2i\sigma(u,z)}\qF\rho(u)=e^{i(\mu,u)-\frac12(u,Cu)}\lrp{\frac1{\sqrt{\pi}}e^{-i(w,u)-\frac12(u,Su)}}\\
&=\frac1{\sqrt{\pi}}e^{-i(w-\mu,u)-\frac12(u,(S+C)u)}=\qF{\rho(w-\mu,S+C)}(u)\,,
\end{align*}
where $C$ is the $2\times 2$ covariance matrix of the Gaussian random vector ${X=(-2y, 2x)}$ with mean $\mu \in \mathbb R^2$ and, thence, its characteristic function is 
\begin{gather*}
{\mathbb E}e^{-2i\sigma(u,z)}={\mathbb E}e^{i (u,X)}=e^{i(\mu,u)-\frac12(u,Cu)}\,.\end{gather*}
Therefore, $T$ is  a Gaussian channel.
\end{example}

\begin{example}[Bosonic Gaussian QMS] A quantum Markov semigroup (QMS for short) $T=(T_t)_{t\geq 0}$ is called Gaussian if for each $t\geq 0$, $T_{t}$ is a Gaussian channel.

Let $z=r + is\in\C$  and $\omega =(\omega_t)_{t\geq 0}$ be the standard classical Brownian motion, viz. for each $t\geq 0$ the random variable $\omega_t$ has a normal distribution with mean zero and variance $t$. Define 
\begin{gather*}
T_t^{z}(\rho) := {\E} W_{\omega_t z} \rho W_{\omega_t z}^* = {\E} e^{-2i\omega_t\sigma(z,a)} \rho e^{2i\omega_t\sigma(z,a)}\,,\quad (t\geq0)
\end{gather*}
viz. $T_t^z(\rho)=\E e^{-2i\omega_t\ad_{\sigma(z,a)}}\rho$ (see reasoning of Example~\ref{eg:Squeezing-channel}), where $\E$ denotes expectation w.r.t. the probability measure of the process $\omega$ in the strong Bochner sense, i.e., for each $u\in L_{2}({\mathbb R})$, the function $f(x)=W_{x z} \rho W_{x z}^* u$ is integrable in norm w.r.t. the distribution of $\omega_t$. Since the Brownian motion has independent increments, one has that $(T_t^z)_{t\geq 0}$ is a QMS \cite[E.g.\,30.1]{MR3012668}. Besides, 
\begin{align}\label{eq:Bosonic-QMS}
\begin{split}
\E e^{-2i\omega_t\ad_{\sigma(z,a)}}\rho&=\frac{1}{\sqrt{2\pi t}}\int e^{-2ix\ad_{\sigma(z,a)}}\rho e^{-\frac{x^2}{2t}}dx\\&=\frac{1}{\sqrt{2\pi}}\int e^{-2i\sqrt{t}x\ad_{\sigma(z,a)}}\rho e^{-\frac{x^2}{2}}dx\\&=\frac{1}{\sqrt{2\pi}}\int e^{-\frac12\lrp{xI+2i\sqrt{t}\ad_{\sigma(z,a)}}^2-2t\ad_{\sigma(z,a)}^2}\rho dx=e^{-2t\ad_{\sigma(z,a)}^2}\rho\,,
\end{split}
\end{align} 
i.e., $T_t^z(\rho)=e^{-2t\ad_{\sigma(z,a)}^2}\rho$. Hence, at least formally since $\sigma(z,a)$ is unbounded, the generator of $(T_t^{z})_{t\geq 0}$ is ${\mathcal L}_{z}=2\ad_{\sigma(z,a)}^2$, i.e.,  
\begin{align*}
\mathcal L_z(\rho)=2\big[\sigma(z,a),[\sigma(z,a),\rho]\big]=2\{\sigma(z,a)^2,\rho\}-4\sigma(z,a)\rho\sigma(z,a)\,.
\end{align*}
Thereby, $\mathcal L_z$ belongs to the class of the so-called \textit{quadratic super-operators} (cf. \cite{MR4065489}).
\end{example}
\begin{theorem}
For every $z=r+is$, the QMS $T^z = \big(T_{t}^z\big)_{t\geq 0}$ is Gaussian. Moreover, if $z\neq 0$, then $T^z$ has not Gaussian invariant states. 
\end{theorem}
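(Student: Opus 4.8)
The plan is to split the statement into two parts: first, that each $T_t^z$ is a Gaussian channel (and hence $T^z$ is a Gaussian QMS), and second, the non-existence of Gaussian invariant states when $z\neq 0$. For the first part I would argue exactly as in the Bosonic Gaussian channel example: since $\omega_t$ is a mean-zero normal random variable with variance $t$, the random vector $X_t=(-2\omega_t s,\,2\omega_t r)$ is Gaussian with mean $\mu=0$ and covariance matrix $C_t = 2t\begin{pmatrix} s^2 & -rs\\ -rs & r^2\end{pmatrix}$, so by the computation already carried out (with $\mu=0$), for any Gaussian state $\rho=\rho(w,S)$ one gets $\mathcal F[T_t^z(\rho)](u)=\frac{1}{\sqrt{\pi}}e^{-i(w,u)-\frac12(u,(S+C_t)u)}$, i.e.\ $T_t^z(\rho)=\rho(w,\,S+C_t)$. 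This is Gaussian, so $T_t^z$ is a Gaussian channel for every $t\geq 0$ and the semigroup is a Gaussian QMS.

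For the second part, suppose $\rho$ is a Gaussian invariant state, $\rho=\rho(w,S)$, and $z\neq 0$. Invariance $T_t^z(\rho)=\rho$ together with the formula just obtained forces $S + C_t = S$ for all $t\geq 0$, i.e.\ $C_t = 0$. But $C_t = 2t\begin{pmatrix} s^2 & -rs\\ -rs & r^2\end{pmatrix}$ has trace $2t(r^2+s^2)=2t\abs{z}^2 > 0$ for $t>0$ since $z\neq 0$, a contradiction. Hence no Gaussian state can be invariant. (Equivalently, at the level of characteristic functions, invariance would require $e^{-i(w,u)-\frac12(u,Su)} = e^{-i(w,u)-\frac12(u,(S+C_t)u)}$ for all $u$, forcing $(u,C_tu)=0$ for all $u$, hence $C_t=0$, which fails.)

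The only genuinely delicate point is making the invariance argument rigorous: one should check that a Gaussian state is uniquely determined by its non-commutative Fourier transform (this is immediate since $\mathcal F$ is injective on $L_1(\initsp)$, as $\mathcal F$ extends to a unitary on $L_2$ and states are trace-class), so that equality of the functions $\mathcal F[T_t^z(\rho)]$ and $\mathcal F[\rho]$ does indeed force $C_t=0$; and one should confirm that the Bochner-integral manipulation in \eqref{eq:Bosonic-QMS} applies verbatim to a general Gaussian state $\rho$ (it does, since only $\rho\in L_1(\initsp)$ and the boundedness of the Weyl operators were used). I expect the main obstacle to be purely expository — organizing these observations — rather than any real analytic difficulty, since all the needed computations are already present in the preceding examples.
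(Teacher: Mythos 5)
Your proposal is correct and follows essentially the same route as the paper: compute $\qF{T_t^z(\rho)}(u)$ by pulling the expectation through via the Weyl relations and the Gaussian characteristic function of $\omega_t$, conclude the output is Gaussian with unchanged mean and covariance $S$ plus a $t$-proportional positive term, and then rule out Gaussian invariant states because that added term cannot vanish for $z\neq 0$, $t>0$ (the paper words this step as $\qF{\rho}(u)=e^{-2t\sigma(z,u)^2}\qF{\rho}(u)$ forcing $\qF{\rho}=0$, contradicting injectivity of $\mathcal F$, while you invoke uniqueness of the covariance matrix -- the same idea). The only flaw is a harmless numerical slip: the covariance of $X_t=(-2\omega_t s,\,2\omega_t r)$ is $4t\begin{pmatrix} s^2 & -rs\\ -rs & r^2\end{pmatrix}$, not $2t$ times that matrix, which changes nothing since the added matrix still has strictly positive trace whenever $z\neq0$ and $t>0$.
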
 
\begin{proof} Following the same lines of \eqref{eq:Bosonic-QMS}, one can compute for $u\in\C$ that 
\begin{gather*}
\E\lrp{e^{2i\omega_t\sigma(z,u)}}=e^{-2t(\sigma(z,u))^2}=e^{-2t(u,Cu)}\,,\quad\mbox{where}\quad
C=\begin{pmatrix}
r^2&-rs\\-rs&s^2
\end{pmatrix}\in{\mathcal B}_{\mathbb R}({\C})\,.
\end{gather*}
Thus, if $\rho=\rho(w,S)$ is a Gaussian state, then
\begin{align}\label{eq:Gaussian-channel}
\begin{split}
\sqrt{\pi}\qF{T_{t}^z (\rho)}(u)&=\E\lrp{\tr{\rho W_{-\omega_t z}W_uW_{\omega_t z}}}=\E\lrp{e^{2i\omega_t\sigma(z,u)}}\tr{\rho W_u}\\
&=e^{-2t(u,Cu)}e^{-i(w,z)-\frac12 (z,Sz)}=e^{-i(w,z)-\frac12 (z,(S+4tC)z)}\,,
\end{split}\end{align}
which proves that $T^z$ is Gaussian. Now, if $\rho$ is a Gaussian invariant state of $T^z$, then it follows by \eqref{eq:Gaussian-channel} that
\begin{gather*}
\qF{\rho}(u)=\qF{T_{t}^z (\rho)}(u)=e^{-2t\sigma(z,u)^2}\qF\rho(u)\,,
\end{gather*}
i.e., $\qF\rho=0$. This contradicts the fact that $z\neq0$ and $\mathcal F$ is injective.
\end{proof}

\begin{acknowledgments}
The financial support from CONACYT-Mexico (Grant CF-2019-684340) and postdoctoral fellowship 136135 ``Estructura de los estados estacionarios de generadores de Markov de transporte cu\'antico y del l\'imite de baja densidad'', is gratefully acknowledged. 
\end{acknowledgments}

\def\cprime{$'$} \def\lfhook#1{\setbox0=\hbox{#1}{\ooalign{\hidewidth
  \lower1.5ex\hbox{'}\hidewidth\crcr\unhbox0}}} \def\cprime{$'$}
  \def\cprime{$'$} \def\cprime{$'$} \def\cprime{$'$} \def\cprime{$'$}
  \def\cprime{$'$} \def\cprime{$'$}

\end{document}